\DeclareMathAlphabet{\pazocal}{OMS}{zplm}{m}{n}
\newcommand{\Oh}{\mathcal{O}}
\newcommand{\inprod}[1]{\left\langle #1 \right\rangle}
\newtheorem{theorem}{Theorem}
\newtheorem{lemma}{Lemma}
\newcommand{\proofbelow}{3pt}
\newcommand{\afterproof}{\hfill $\blacksquare$ \par \vspace{\proofbelow}}
\renewenvironment{proof}{\noindent\textbf{Proof.}\,}{\afterproof}
\author{Vasileios Nakos\thanks{Harvard University. \texttt{vasileiosnakos@g.harvard.edu}. Supported in part by NSF grant IIS-1447471}} 
\title{Almost Optimal Phaseless Compressed Sensing with Sublinear Decoding Time}
\begin{document}

\maketitle

\begin{abstract}

In the problem of compressive phase retrieval, one wants to recover an approximately $k$-sparse signal $x \in \mathbb{C}^n$, given the magnitudes of the entries of $\Phi x$, where $\Phi \in \mathbb{C}^{m \times n}$. This problem has received a fair amount of attention, with sublinear time algorithms appearing in \cite{cai2014super,pedarsani2014phasecode,yin2015fast}. In this paper we further investigate the direction of sublinear decoding for real signals by giving a recovery scheme under the $\ell_2 / \ell_2$ guarantee, with almost optimal, $\Oh(k \log n  )$, number of measurements. Our result outperforms all previous sublinear-time algorithms in the case of real signals. Moreover, we give a very simple deterministic scheme that recovers all $k$-sparse vectors in $O(k^2)$, using $6k-2$ measurements.

\end{abstract}

\section{Introduction}

Compressive sensing (CS) has emerged during the last decade as a powerful framework for understanding the fundamental limits for signal acquisition and recovery \cite{candes2005decoding,donoho2006compressed}. The basic idea behind CS is that a high-dimensional signal that is sparse in some basis can be recovered from a number of linear measurements, way less than the dimension of the signal. 

In many applications, however, such as X-ray crystallography \cite{sayre1952some}, optical imaging \cite{shechtman2015phase} or astronomy \cite{luke2002optical}, we do not have access to phase information, but rather we can only get the magnitude of each linear measurement. For example, in Coherent Diffractive Imaging \cite{zuo2003atomic}, which is a technique for 2D reconstruction of the image of nanoscale structures, the data we collect is described in terms of absolute counts of photons or electrons; these measurements describe amplitude but loses phases information.

More specifically, in compressive phase retrieval one gets $m$ measurements \[y_i = |\inprod{a_i,x}| ,i=1,\ldots, m,\]
where $a_i$ are the rows of the sensing matrix $A \in \mathbb{C}^{m \times n}$ and $x \in \mathbb{C}^n$. The goal is to approximately recover $x$ from measurements $y_i, i =1, \ldots, m$.

 The problem of recovering a vector $x$ from the magnitudes of its Discrete Fourier Transform has been extensively studied for many decades in the signal processing literature, since its first appearance in 1952. The archetype algorithms that have been devised and are still implemented include the Gerchberg-Saxton algorithm \cite{gerchberg1972practical}, Fienup's algorithm \cite{fienup1987phase} and variants of them. All of the aforementioned algorithms are iterative and follow the approach of alternating projections. Although these algorithms appear to perform well in practice, we have no theoretical guarantees and, moreover, they might not even converge to the initial vector, as they might get stuck at local minima \cite{shechtman2015phase}.

The last years, intensity or phaseless measurements have again been an object of study, this time also under sparsity assumptions \cite{shechtman2015phase}. The first algorithm that guaranteed exact phase retrieval from $\Oh(n)$ Gaussian measurements was PhaseLift \cite{candes2015phase,candes2014solving,candes2013phaselift}, a convex semidefinite programming approach which lifts the vector $x$ to the space of $n \times n$ matrices. However, PhaseLift is computationally very expensive, making it impractical. Non-convex formulations such as Wirtinger flows \cite{candes201flow,chen2015solving} or Truncated Amplitude Flow \cite{wang2016solving} have been developed; the algorithms they use, however, are complicated with many parameters. Very recently, a convex approach called PhaseMax \cite{bahmani2016phase,goldstein2016phasemax} has been suggested and analyzed; the best known analysis of PhaseMax \cite{hand2016elementary} shows that PhaseMax succeeds in finding the underlying vector $x$ under optimal sample complexity.

A decent amount of literature has been devoted to understanding the fundamentals limits of the problem, such as injectivity and uniqueness of solutions \cite{akccakaya2013new}, as well as our ability to design robust algorithms under sparsity assumptions \cite{eldar2014phase,bandeira2013near}. The study of phaseless measurements was initiated in \cite{moravec2007compressive}. In \cite{akccakaya2013new} the authors show that $4k-1$ measurements suffice for recovering all $k$-sparse signals, but they do not suggest any algorithm. In \cite{eldar2014phase} an algorithm for stable recovery is suggested, using $\Oh(k \log (n/k))$ measurements. Using an SDP approach, the papers \cite{ohlsson2011compressive,li2013sparse} show that $\Oh(k^2 \log n)$ measurements suffice to recover exactly $k$-sparse signals using Gaussian matrices using $n^3$ decoding time. To the best of our knowledge, the only papers that achieve sublinear decoding time are \cite{cai2014super,pedarsani2014phasecode,yin2015fast}. The first paper \cite{cai2014super} efficiently reconstructs an exactly $k$-sparse vector $x$ up to a global phase using $\Oh(k)$ measurements in $\Oh(k \log k)$ decoding time. In the other two papers \cite{pedarsani2014phasecode, yin2015fast}, the authors show how to find an arbitrarily large constant fraction of the coordinates of $x$. More specifically, the PhaseCode algorithm of \cite{pedarsani2014phasecode} achieves $\Oh(k)$ measurements for exactly $k$-sparse vectors. The authors in \cite{yin2015fast} show how to robustify PhaseCode so that it also tolerates post-measurements random noise by redesigning the sensing matrix but keeping the main ball coloring algorithm of PhaseCode the same; their scheme needs $\Oh(k \log^3 n)$ measurements in total. All of the three aforementioned algorithms succeed with probability $1 - \Oh(\frac{1}{k})$. There, the authors work in the regime $k = \beta n^{\delta}$ for constants $\beta,\delta$ and moreover each component of $x$ lis in a finite small set. The scheme they use has $\Oh(k \log n)$ measurements and the failure probability is $\frac{1}{k}$.

In this paper we focus on the case of real signals and give a new algorithm called $\textsf{Cphase}_0$, which recovers an approximately $k$-sparse vector from $\Oh(k \log n )$ measurements in time $\Oh(k^{1+o(1)} poly(\log n))$. More specifically, we do not get sign information from the measurements but we want to recover a vector $\hat{x}$ which approximates $x$ up to a universal sign flip. Our algorithm succeeds with probability $1 - o_n(1)$. 
 
Our algorithm departs from the previous techniques and manages to give a sublinear-time algorithm for the case of real signals under the so-called $\ell_2/\ell_2$ guarantee, that achieves $\Oh(k \log n)$ measurements, sublinear decoding time and $o_n(1)$ failure probability. The error guarantee we use has been extensively studied in standard compressed sensing before \cite{gilbert2012approximate, gilbert2013l2} and it is one of the most widely used error guarantees.

\section{Notation}

For a vector $x \in \mathbb{R}^n$ we denote by $x_S$ the vector with the coordinates in $[n]-S$ zeroed out. Also, $\mathrm{head}(k)$ denotes the set of the indices of the $k$ largest in magnitude coordinates of that vector, and $\mathrm{tail}(k) = [n]-\mathrm{head}(k)$. Thus, $x_{\mathrm{tail}(k)}$ is the vector obtained by zeroing out the largest $k$ in magnitude coordinates of $x$. We denote by $|x|$ the vector obtained replacing each entry of $x$ with its absolute value. 

We need the notion of heavy hitters, which is a common concept in the streaming algorithms literature \cite{charikar2002finding}. Intuitively, an $\ell_2$ heavy hitter is a coordinate of the vector $x$ which carries a significant amount of $\ell_2$ mass. Formally, $i$ is an $\epsilon$-$\ell_2$ heavy hitter if $|x_i| \geq \epsilon \|x_{\mathrm{tail}(\lceil \frac{1}{ \epsilon^2 } \rceil)}\|_2$. From now on, when we will refer to a heavy hitter, we will mean an $\ell_2$ heavy hitter.

Our sensing matrix will be denoted by $\Phi$. We also define $y = |\Phi x|$. 
The function $f: x \mapsto |\Phi x|$ is called the sketch of $x$. We note that, in contrast to standard compressed sensing, $f$ is not a linear function.

\section{Our Results}

In this paper we treat the case of signals with real coordinates and show that we can design a scheme with almost optimal measurements and sublinear decoding time, under the $\ell_2/ \ell_2$ guarantee, which is well studied in standard compressed sensing. We now state the main result of this paper.

\begin{theorem}
There exists a distribution $\mathcal{D}$ over matrices $\Phi \in \mathbb{R}^{m \times n} $ associated with a decoding procedure $\textsf{Cphase}_0$, such that $\forall x \in \mathbb{R}^n$, $\hat{x} = \textsf{Cphase}_0(\Phi,|\Phi x|)$ satisfies

\[   \mathbb{P}_{ \Phi \sim \mathcal{D}} [ \mathrm{min}\{\|x - \hat{x}\|_2^2,\|x+ \hat{x} \|_2^2 \}\leq C \|x_{\mathrm{tail}(k)}\|_2^2] > 1 -o_n(1),\]
where $C$ is an absolute constant.
Moreover, the number of measurements of the recovery scheme is $m = \Oh(k \log n)$ and the running time of $\textsf{Cphase}_0$ is $\Oh( k^{1 + \gamma} \mathrm{poly}(\log n)))$, for all constant $\gamma > 0$.
\end{theorem}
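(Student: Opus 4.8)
The plan is to reduce phaseless recovery of a real signal to \emph{ordinary} (sign-aware) compressed sensing, using a small amount of extra structure to disambiguate signs up to a single global flip. I would design $\Phi$ as the vertical concatenation of three blocks. The first block, $\Phi_{\mathrm{CS}} \in \mathbb{R}^{O(k\log n)\times n}$, is a standard $\ell_2/\ell_2$ compressed sensing matrix (e.g. the Gilbert--Strauss--type construction of \cite{gilbert2012approximate, gilbert2013l2}) that supports sublinear-time recovery: given a \emph{linear} sketch $\Phi_{\mathrm{CS}} z$ it returns in $O(k^{1+\gamma}\mathrm{poly}(\log n))$ time a $\hat z$ with $\|z-\hat z\|_2^2 \le C'\|z_{\mathrm{tail}(k)}\|_2^2$. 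The obstacle is that we are not handed $\Phi_{\mathrm{CS}} x$ but only $|\Phi_{\mathrm{CS}} x|$, i.e. each linear measurement with an unknown $\pm$. So the heart of the argument is a sign-recovery gadget that lets us recover, for (essentially) every measurement $\phi$, the correct sign of $\langle \phi, x\rangle$ relative to one global reference.

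Here is the gadget. For each row $\phi$ of $\Phi_{\mathrm{CS}}$, add an auxiliary row $\phi' = \phi + \rho$, where $\rho$ is an independent ``anchor'' vector (the same $\rho$, or a small fixed pool of them, reused across all rows). Then from the three magnitudes $|\langle\phi,x\rangle|$, $|\langle\rho,x\rangle|$, $|\langle\phi+\rho,x\rangle|$ one can solve for the relative sign of $\langle\phi,x\rangle$ and $\langle\rho,x\rangle$ via the identity $\langle\phi+\rho,x\rangle^2 = \langle\phi,x\rangle^2 + \langle\rho,x\rangle^2 + 2\langle\phi,x\rangle\langle\rho,x\rangle$, which determines the sign of the cross term $\langle\phi,x\rangle\langle\rho,x\rangle$ whenever neither inner product is (near-)zero. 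Fixing the global sign ambiguity to be that of $\langle\rho,x\rangle$, this yields $\mathrm{sign}(\langle\phi,x\rangle)$ up to the single global flip, hence the signed value $\langle\phi, sx\rangle$ for an unknown global $s\in\{\pm1\}$, for every good row. Feeding the reconstructed signed sketch $\Phi_{\mathrm{CS}}(sx)$ into the sublinear-time decoder recovers $s\hat x$ with $\min\{\|x-\hat x\|_2^2,\|x+\hat x\|_2^2\}\le C\|x_{\mathrm{tail}(k)}\|_2^2$. The measurement count stays $O(k\log n)$ (a constant-factor blowup), and the running time is dominated by the underlying decoder, giving $O(k^{1+\gamma}\mathrm{poly}(\log n))$.

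The main obstacle is \emph{robustness} of the sign gadget: the cross-term sign is decided by comparing $|\langle\phi+\rho,x\rangle|^2$ against $|\langle\phi,x\rangle|^2+|\langle\rho,x\rangle|^2$, and when $|\langle\phi,x\rangle|$ or $|\langle\rho,x\rangle|$ is small compared to the tail noise $\|x_{\mathrm{tail}(k)}\|_2$, the sign can flip and corrupt that row of the sketch. I would handle this in two layers. First, the anchor: since a single $\rho$ could be nearly orthogonal to $x$, use a small collection $\rho_1,\dots,\rho_t$ of independent anchors ($t = O(\log n)$ or a fixed constant, depending on the desired failure probability) so that with probability $1-o_n(1)$ at least one anchor is ``well-aligned'' with the head of $x$, i.e. $|\langle\rho_j,x\rangle| = \Omega(\|x_{\mathrm{head}(k)}\|_2)$; then chain signs through that anchor. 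Second, the decoder: I would invoke the fact that the sublinear $\ell_2/\ell_2$ scheme of \cite{gilbert2012approximate, gilbert2013l2} is robust to post-measurement noise, so a few corrupted rows (those where $\langle\phi,x\rangle$ is itself smaller than the tail, whose total squared contribution is $O(\|x_{\mathrm{tail}(k)}\|_2^2)$ by the measurement structure) are absorbed into the error term. Putting the anchor analysis, a union bound over rows, and the robust-decoding guarantee together yields the claimed probability $1-o_n(1)$ and the stated $\ell_2/\ell_2$ bound with an absolute constant $C$. The only real subtlety to watch is bounding the aggregate $\ell_2^2$ mass of the mis-signed measurements; everything else is bookkeeping on constants.
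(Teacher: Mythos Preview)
Your route is genuinely different from the paper's. The paper never tries to recover the signs of the \emph{measurements}; instead it uses sketches whose decoding already ignores signs (ExpanderSketch and CountSketch work from $|\Phi x|$ verbatim) to obtain a candidate support $S$ and the magnitudes $|x_i|$ for $i\in S$, and then recovers the relative signs of the \emph{coordinates} $x_i$. That last step is done by hashing pairs of coordinates into shared buckets, running a same-sign/opposite-sign test on each pair, and feeding the resulting noisy graph into a stochastic block model solver (\textsf{Degree Profiling}); a separate pruning stage, driven by an estimate of $\|x_{\mathrm{tail}(k)}\|_2$, ensures every surviving coordinate dominates the bucket noise. The $o_n(1)$ failure probability is then obtained by repetition and majority vote over $O(\log_k n)$ copies. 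Your polarization-against-an-anchor trick, by contrast, reconstructs a signed sketch $\Phi_{\mathrm{CS}}(sx)$ and hands it to an off-the-shelf $\ell_2/\ell_2$ decoder. This is conceptually simpler and, in the noiseless model of the theorem, it works: for real scalars the identity $|a+b|^2-|a|^2-|b|^2=2ab$ is \emph{exact}, so once $\langle\rho,x\rangle\neq 0$ you recover every $\mathrm{sign}\,\langle\phi,x\rangle$ exactly, with no corrupted rows at all. A single Gaussian anchor already gives $\Pr[\langle\rho,x\rangle=0]=0$ for each fixed $x\neq 0$, so the multi-anchor chaining and the appeal to post-measurement-noise robustness of the decoder are unnecessary here. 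What the paper's more elaborate machinery buys is that its sign-recovery step is inherently tolerant to per-bucket noise, which is the natural stepping stone toward noisy or complex-valued variants where the polarization identity only pins down $\mathrm{Re}(a\bar b)$ and is no longer decisive; your gadget is tight to the exact, real-valued setting.

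One caution on your write-up: the paragraph about rows with small $|\langle\phi,x\rangle|$ ``flipping'' and being ``absorbed'' by the decoder is both unneeded (no flips occur in this model) and, had it been needed, unjustified as stated --- a constant fraction of rows of a sublinear $\ell_2/\ell_2$ scheme can have $|\langle\phi,x\rangle|\le \|x_{\mathrm{tail}(k)}\|_2$, and the for-each decoders you cite are not advertised as robust to adversarial sign errors on that many rows. So if you keep this approach, drop the robustness detour and simply argue exactness of the polarization step plus the for-each guarantee of the underlying scheme.
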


In the appendix we also give a very simple deterministic scheme with $6k-2$ measurements that can recover all $k$-sparse vectors in quadratic time in $k$. Previous algorithms for $k$-sparse signals had $\tilde{\Oh}(k)$ time and $\Oh(k)$ measurements, but they were randomized and had a failure probability that was depending on $k$~\cite{cai2014super,pedarsani2014phasecode}.

\section{Robust Compressed Sensing from Phaseless Measurements}

\subsection{Main idea behind the proof}

Because we have access only to the magnitudes of entries of $\Phi x$, a lot of standard algorithms appearing in the compressed sensing literature \cite{gilbert2013l2,gilbert2012approximate,porat2012sublinear} cannot be implemented, since they are based on the linearity of the sketch. However, here our sketch is not linear. Hence, we will make use of sketches that essentially do not use the sign information from the measurements they get: these schemes are the CountSketch \cite{charikar2002finding} and the ExpanderSketch \cite{larsen2016heavy}.\\
In this subsection, we prove a weaker version of Theorem $1$, where the scheme has constant failure probability.  Essentially, we prove the following theorem. We will then show how this algorithm can be modified so that it gives us low failure probability.

\begin{theorem}
There exists a distribution $\mathcal{D}$ over matrices $\Phi \in \mathbb{R}^{m \times n} $ associated with a decoding procedure \textsf{Cphase}, such that $\forall x \in \mathbb{R}^n$, $\hat{x} = Cphase(\Phi,|\Phi x|)$ satisfies

\[   \mathbb{P}_{ \Phi \sim \mathcal{D}} [ \mathrm{min}\{\|x - \hat{x}\|_2^2,\|x+ \hat{x} \|_2^2 \}\leq C \|x_{\mathrm{tail}(k)}\|_2^2] > \frac{2}{3},\]
where $C$ is an absolute consant.
Moreover, the number of measurements of the recovery scheme is $m = \Oh(k \log n )$ and the running time of \textsf{Cphase} is $\Oh( k^{1 + \gamma}poly(\log n)))$, for all constant $\gamma > 0$.
\end{theorem}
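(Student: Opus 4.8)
The plan is to recover a good approximation to $x$ by first recovering a good approximation to the magnitude vector $|x|$, then determining the relative signs of the surviving heavy coordinates by a small number of extra measurements, and finally stitching these together into a single candidate $\hat x$ up to a global sign. For the magnitude-recovery step I would use an $\ell_2/\ell_2$ heavy-hitters data structure (CountSketch or ExpanderSketch) applied not to $x$ but to the nonnegative vector $|x|$: the crucial point is that if $\Phi$ has a row $\phi$ supported on a single coordinate set achieving the right isolation, then $|\langle \phi, x\rangle|$ reveals $|x_i|$-type information without ever needing a sign. Concretely, I would build $\Phi$ as the vertical concatenation of $\Theta(\log n)$ independent CountSketch-style hashings into $\Theta(k)$ buckets, each bucket row further equipped with a $\pm 1$ sign pattern; the magnitude of a bucket measurement is, up to the tail noise, $|x_i|$ for the (likely unique) heavy coordinate $i$ landing in that bucket, and repeating across the $\Theta(\log n)$ independent tables and taking a median isolates the set $H$ of $\epsilon$-$\ell_2$ heavy hitters of $x$ and good estimates of $|x_i|$ for $i \in H$, with the standard guarantee $\||x|_H - (\text{est})\|_2^2 \le C\|x_{\mathrm{tail}(k)}\|_2^2$. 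This uses $\Oh(k\log n)$ measurements and $\Oh(k^{1+\gamma}\mathrm{poly}(\log n))$ time, matching the claimed bounds, once ExpanderSketch is used for the sublinear decoding.

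The new ingredient over standard CS is recovering signs. Having identified $H$ with $|H| = \Oh(k)$, I would add a second block of measurements of the form $|\langle e_i + e_j, x\rangle|$ and $|\langle e_i - e_j, x\rangle|$ for pairs $(i,j)$; from $|x_i \pm x_j|$ together with known $|x_i|, |x_j|$ one reads off whether $x_i$ and $x_j$ have the same or opposite sign (when neither is tiny). To keep the measurement count at $\Oh(k\log n)$ rather than $\Oh(k^2)$, I would not test all pairs: instead pick one ``anchor'' heavy coordinate $i^\star$ and measure $|\langle e_{i^\star} \pm e_j, x\rangle|$ for all $j\in H$ — this is $\Oh(k)$ extra measurements and fixes every sign relative to $i^\star$, hence fixes $\hat x_H$ up to the global flip $\pm$. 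For the sublinear-time version one identifies the anchor and the pairs from the ExpanderSketch output and hashes the pair-measurements appropriately so they too can be decoded in $\mathrm{poly}(\log n)$ time per coordinate. Setting $\hat x = \hat x_H$ on $H$ and $0$ elsewhere, a triangle-inequality/Pythagorean argument gives $\min\{\|x-\hat x\|_2^2, \|x+\hat x\|_2^2\} \le \|x - x_H\|_2^2 + \||x|_H - |\hat x|_H\|_2^2 \le C'\|x_{\mathrm{tail}(k)}\|_2^2$, since $H$ contains (a superset of) the head and the residual off $H$ is dominated by the tail.

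The main obstacle, and where the argument needs care, is the sign step in the presence of tail noise and of near-collisions in magnitude: if $x_i$ and $x_j$ have comparable magnitude but the tail contamination in the bucket is on the same order as $||x_i|-|x_j||$, then $|x_i+x_j|$ and $|x_i-x_j|$ need not be cleanly separated, so the same/opposite-sign test can fail. I would handle this by (a) only ever comparing a coordinate $j\in H$ against the anchor $i^\star$ chosen to be the coordinate of largest estimated magnitude, so the anchor's magnitude dominates the noise, and (b) showing that whenever the test is ambiguous, $|x_j|$ is necessarily so small that getting its sign wrong contributes only $\Oh(|x_j|^2)$ to the error, which can be charged to the tail — i.e. a ``wrong sign only on negligible coordinates'' argument, analogous to the way approximate recovery tolerates missing the smallest head entries. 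A secondary technical point is ensuring the two measurement blocks (magnitude-hashing and pair-tests) use shared randomness consistently and that the union bound over the $\Oh(k)$ pair tests and $\Oh(\log n)$ hash tables still leaves constant success probability; both are routine given the $\Theta(\log n)$ repetitions already built into the CountSketch/ExpanderSketch block.
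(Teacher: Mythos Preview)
Your proposal contains a genuine gap in the sign-recovery step: the measurements you describe are \emph{adaptive}. You write that after identifying the heavy set $H$ and an anchor $i^\star$ you will ``measure $|\langle e_{i^\star}\pm e_j,x\rangle|$ for all $j\in H$''. But the sensing matrix $\Phi$ must be fixed \emph{before} $x$ is seen; you do not know $i^\star$ or $H$ at that point, so rows of the form $e_{i^\star}\pm e_j$ cannot be placed in $\Phi$. Including all $e_i\pm e_j$ obliviously would cost $\Theta(n^2)$ rows, and any hashed version must cope with the fact that whatever bucket a heavy pair lands in will also collect tail mass, so a single pairwise test is only correct with constant probability rather than with high probability. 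Your fallback (``wrong sign only on negligible coordinates'') does not rescue this: the ambiguity is driven by the tail noise in the measurement, not by $|x_j|$ being small, so a large head coordinate can still have its sign test corrupted.

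The paper's proof addresses exactly this obstacle. It uses an \emph{oblivious} block $F$ of random $\pm 1$ rows whose sparsity is tuned (in fact, a whole dyadic family $F_{2^l}$) so that, whatever the pruned heavy set $S_2$ turns out to be, a constant fraction of the rows of the appropriate $F_{2^{l^*}}$ intersect $S_2$ in exactly two coordinates. Each such row yields a same/opposite-sign test between a \emph{random} pair $\{u,v\}\subset S_2$, correct only with probability about $2/3$ because of tail contamination. These noisy pairwise tests are then aggregated not through a single anchor but by building a graph on $S_2$ and observing that it is an instance of the stochastic block model (same-sign pairs are connected with higher probability than opposite-sign pairs); the \textsf{Degree Profiling} algorithm then recovers the two sign classes. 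A separate pruning step (using an estimate $L$ of the tail energy from a further block $E$) removes coordinates whose magnitude is below the per-measurement noise level, which is the correct version of your ``negligible coordinates'' intuition. The magnitude-recovery part of your outline (ExpanderSketch plus CountSketch) does match the paper; it is only the sign step that needs to be replaced by this non-adaptive random-pair / block-model machinery.
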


In the last subsection of this section, we show how a simple twist can drive the failure probability down to $o_n(1)$.

Our sensing matrix consists of of four different sub-matrices, vertically stacked. The first matrix is $A$, and is used to compute a superset $S$ of the $\frac{1}{\sqrt{10k}}$-heavy hitters of $x$. The second matrix is a standard Count-Sketch matrix \cite{charikar2002finding} and is used to approximate magnitudes of all coordinates in the set $S$, while the matrix $F$ enables us to find the relative signs between all these coordinates. We are going to design $F$ such that every row of it will give us, with some good probability, a pair $\{u,v\}$ with $u,v \in S$. Then we are going to treat this pair as an edge in a graph with vertex set $S$. By the design of $F$ an edge is more likely to exist between two coordinates of the same sign rather than two coordinates of different sign; this will allow us to reduce the problem of finding the relative signs to an instance of the stochastic block model problem, which is well-studied in the random graphs literature. More specifically, we use a measurement of $F$ whenever the support of the corresponding row contains exactly $2$ elements $i,j$ in $S$ and perform a test that indicates if we should add an edge between $i$ and $j$; as mentioned before, there is higher probability of an edge existing between two elements of the same sign rather than elements of opposite sign.

However, there is an obstruction in this stage. The problem is that the existence of coordinates in $S$ with small magnitude can lead to incorrect results. For a measurement, we will call `noise' the mass that participates in that measurement that comes from coordinates outside of $S$. In order to implement the test mentioned in the previous paragraph, it turns out that we should have a low amount of noise. This means, that if a coordinates participates in a measurement on which the noise level is larger than the value of the coordinate, it will not be possible to distinguish this coordinate from the noise and, thus, our test might give us edges between elements of opposite sign more frequently than desired. For this reason, we will use a matrix $E$ which will help us roughly estimate the $\ell_2$ mass of the tail of the vector $x$. Then, we prune $S$ by throwing away all coordinates that are small, in order to guarantee that the average noise level is each meaurement is always smaller than the value of each coordinate in $S$. We will then use a recent result from \cite{abbe2015community} to recognise the two clusters, which correspond to the coordinates with positive signs and to the coordinates with negative signs.

\subsection{Toolkit}

In this subsection, we describe algorithms from the area of streaming algorithms and from the area of random graph theory that will be used for our scheme.
The first result we will make use of is the following theorem from \cite{larsen2016heavy}.

\begin{theorem}{ \cite{larsen2016heavy}}
\label{thm:expandersketch}

For any $n$ and $K<n$ there exists a distribution  $\mathcal{D}$ over matrices $A \in \mathbb{R}^{\Oh(K \log n) \times n} $ and an algorithm \textsf{ExpanderSketch} such that, for every vector $x \in \mathbb{R}^n$, for $y= | \Phi x|$, \textsf{ExpanderSketch}(y) returns a set $S$ which satisfies the following with probability at least $1 - \frac{1}{poly(n)}$:

\begin{itemize}
\item If $|x_i|^2 > \frac{1}{K} \|x_{\mathrm{tail}(K)}\|_2^2$ then $i \in S$
\item $|S| = \Oh(K)$.
\end{itemize}

The running time of \textsf{ExpanderSketch} is $\Oh(K poly( \log n))$.

\end{theorem}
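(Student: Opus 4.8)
The plan is to prove this by the ``hash, encode, and stitch'' strategy of the ExpanderSketch, carried out in a sign-oblivious manner so that it runs on $y=|Ax|$. The two primitives used everywhere below --- (i) estimating a single magnitude $|x_i|$, and (ii) identifying the element of a small universe carrying the dominant $\ell_2$ mass --- can both be performed from magnitudes alone: a CountSketch point query can use $\mathrm{median}_r|\mathrm{bucket}_r|$ in place of $\mathrm{median}_r\sigma_{r,i}\cdot\mathrm{bucket}_r$ and still recover $|x_i|$ up to the bucket noise, so the whole construction never consults a sign. For the reduction, hash $[n]$ into $B=\Theta(K)$ buckets with a pairwise-independent $h$, and give each bucket an independent identification sub-sketch of $\Theta(\log n)$ rows, so the grand total is $\Oh(K\log n)$ rows. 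A max-load bound shows that, with probability $1-1/\mathrm{poly}(n)$, no bucket receives more than $\Theta(\log n/\log\log n)$ ``heavy-ish'' coordinates (those above a suitable threshold $\theta_0$, of which there are only $\Oh(K)$), and --- possibly after a further polylogarithmic level of hashing inside each bucket --- a tail-mass concentration bound shows that every $i$ with $|x_i|^2>\|x_{\mathrm{tail}(K)}\|_2^2/K$ lands in a (sub-)bucket whose residual mass is at most $c\,|x_i|^2$, again with probability $1-1/\mathrm{poly}(n)$. After this it suffices to solve, inside one bucket in $\mathrm{poly}(\log n)$ time, the task of recovering the indices of $\Oh(\log n/\log\log n)$ planted coordinates each dominating the residual noise.

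For the per-bucket identification, fix an $\Oh(1)$-degree expander $\mathcal{G}$ on a vertex set $[T]$ with $T=\Theta(\log n/\log\log n)$, split the $\log_2 n$ bits of an index $i$ into $T$ blocks of $\Theta(\log\log n)$ bits, and for each $t\in[T]$ let $\phi_t(i)$ be the tuple formed by $t$, the $t$-th block of $i$, and the blocks of $i$ at the $\Oh(1)$ neighbours of $t$ in $\mathcal{G}$; this is an element of a universe $U$ of size $\mathrm{poly}(\log n)$. In each bucket, and for each $t$, run a CountSketch over $U$ of constant width and $\Theta(\log\log n)$ rows, so $\Theta(\log\log n)\cdot T=\Theta(\log n)$ rows per bucket. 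Because a planted coordinate $i$ dominates the residual noise, $\phi_t(i)$ is the $\ell_2$-heavy element of $U$ under $\phi_t$ and is returned, from magnitudes alone, except with probability $1/\mathrm{poly}(\log n)$. By a Chernoff bound over the $T$ independent positions, a $(1-\beta)$-fraction of the recovered chunks $\hat\phi_1,\dots,\hat\phi_T$ equal the true $\phi_t(i)$ for an arbitrarily small constant $\beta$; and since each chunk fails with probability $1/\mathrm{poly}(\log n)$ while $T=\Theta(\log n/\log\log n)$, this Chernoff bound in fact holds with probability $1-1/\mathrm{poly}(n)$, leaving room for a union bound over all $\le n$ heavy coordinates.

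To stitch, form a graph $H$ on the recovered ``(position, value)'' chunks of a bucket, joining two chunks when their positions are adjacent in $\mathcal{G}$ and their overlapping blocks agree. The $T$ correct chunks of a planted coordinate span a copy of $\mathcal{G}$ inside $H$ --- an internally well-connected, low-conductance cluster --- while the few wrong chunks carry essentially random blocks and so are consistent with only a vanishing fraction of any true cluster. Running the cluster-preserving clustering routine of Larsen et al.\ on $H$ (still $\mathrm{poly}(\log n)$ time, since $H$ has $\Oh(T)$ vertices) outputs, for each of the $\Oh(\log n/\log\log n)$ planted coordinates in the bucket, a vertex set agreeing with its true $\mathcal{G}$-copy on a $(1-\Oh(\beta))$-fraction of vertices, and a per-block majority over such a set reconstructs the index exactly. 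Collecting over all $B=\Theta(K)$ buckets yields $\mathrm{poly}(K,\log n)$ candidate indices containing every heavy coordinate; a final magnitude-estimation pass (an auxiliary CountSketch used to point-query each candidate and keep the $\Oh(K)$ of largest estimated magnitude) enforces $|S|=\Oh(K)$ while retaining all true heavy coordinates. The measurement total is $\Theta(B\log n)=\Oh(K\log n)$ and the decoding time is $\Oh(K\,\mathrm{poly}(\log n))$.

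The delicate part is the stitching analysis: one must show that cluster-preserving clustering genuinely isolates each planted $\mathcal{G}$-copy --- that the spurious part of $H$ never merges two true clusters, fractures one, or attaches enough mass to a true cluster to flip a block majority --- which leans on the expansion of $\mathcal{G}$ together with a sharp bound on how many wrong chunks can be simultaneously consistent with a given correct chunk. A second, more bookkeeping obstacle is to choose the threshold $\theta_0$ (and the number of levels of hashing) so that \emph{both} the max-load bound (few heavy-ish coordinates per bucket) and the residual-noise concentration (residual noise $\le c|x_i|^2$ for every heavy $i$) hold simultaneously with probability $1-1/\mathrm{poly}(n)$, and to verify that the per-bucket budget stays $\Theta(\log n)$ --- which is exactly why the index is encoded through a constant-degree expander into a polylogarithmic universe rather than bit by bit.
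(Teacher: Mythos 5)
This theorem is not proved in the paper at all: it is imported as a black box from \cite{larsen2016heavy}, and the only thing the author adds is the one-sentence remark immediately after the statement that the algorithm ``only really needs $|\Phi x|$.'' Your write-up is therefore a reconstruction of the external proof rather than an alternative to anything in this paper. Judged on those terms, it is faithful to the actual architecture of Larsen et al.\ \cite{larsen2016heavy} --- hashing into $\Theta(K)$ buckets, encoding each index into $\Theta(\log n/\log\log n)$ chunks of $\Theta(\log\log n)$ bits augmented by the neighbouring blocks under a constant-degree expander, recovering each chunk from a CountSketch over a $\mathrm{poly}(\log n)$-size universe, and stitching via cluster-preserving clustering --- and, importantly, you correctly isolate and justify the one claim this paper actually leans on: that every primitive is sign-oblivious, because magnitude point queries can take $\mathrm{median}_r|\mathrm{bucket}_r|$ (correct up to the bucket noise by the triangle inequality) and chunk identification can exhaustively magnitude-query the polylogarithmic universe. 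That is strictly more justification than the paper supplies. The parts you explicitly defer --- the cluster-preserving clustering analysis and the simultaneous choice of threshold and hashing levels --- are exactly the technical content of the cited paper, so your proposal should be read as a correct high-level account of why the citation remains valid in the phaseless setting, not as a self-contained proof; if you intended the latter, the stitching argument (that spurious chunks cannot merge, fracture, or corrupt a planted expander copy) is the gap that would still need to be filled.
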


We point out that \cite{larsen2016heavy} is written as working for $y =  \Phi x$, but the algorithm only really needs $|\Phi x|$.\newline

We will also make use of a result  from \cite{abbe2015community}, where the authors elaborate on the well-studied stochastic block model or planted partition problem.

The stochastic block model, $SBM(N,a,b)$, is a random graph ensemble defined as follows:
\begin{itemize}
\item $N$ is the number of vertices in the graph, $V = [N]$ denotes the vertex set.
\item Each vertex $v \in V$ is assigned independently a hidden (or planted) label $R$ or $B$, equally with probability $\frac{1}{2}$.
\item Each (unordered) pair of nodes $(u, v) \in  V \times V$ is connected independently with probability $a \frac{\log N}{N}$ if have the same label, and with probability $b \frac{\log N}{N}$ if they have different label.
\end{itemize}
The above gives a distribution on graphs with $N$ vertices. Note that $G \sim SBM(N,a,b)$ denotes a random graph drawn under this model, without the hidden (or planted) clusters (i.e., the labels ) revealed. The goal is to recover these labels by observing only the graph. The following theorem holds.

\begin{theorem}{ \cite{abbe2015community}}
\label{thm:signs}

There exists an algorithm called \textsf{Degree Profiling} such that for any instance generated by the stochastic block model, it finds the clusters $(R,B)$ with probability at least $ 1 - o_N(1)$, whenever $ \sqrt{a} - \sqrt{b} \geq 2$. The algorithms runs in $\Oh(N^{1+ \gamma})$ time, for all constant $\gamma > 0$.

\end{theorem}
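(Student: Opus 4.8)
The statement is the exact-recovery theorem for the two-community symmetric stochastic block model above its threshold, and since $\sqrt a-\sqrt b\ge 2>\sqrt2$ we sit strictly above the exact-recovery threshold with room to spare, so the constants in the analysis do not have to be pushed to the edge. The plan is the standard ``estimate then refine'' strategy. Fix a small constant $\gamma'>0$. Split the edges of $G\sim SBM(N,a,b)$ by placing each edge independently into $G_1$ with probability $\gamma'$ and into $G_2$ otherwise; then $G_1\sim SBM(N,\gamma'a,\gamma'b)$, the graph $G_2$ retains a $(1-\gamma')$ fraction of edges in expectation, and crucially $G_1$ and $G_2$ are independent. Run a standard spectral clustering routine on $G_1$: form the centered adjacency matrix, compute by power iteration an approximate leading informative eigenvector, and threshold its entries at $0$ to get a labeling $\hat\sigma:[N]\to\{R,B\}$. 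Since the average degree of $G_1$ is $\Theta(\log N)\to\infty$, the expectation matrix has an informative eigenvalue of order $\log N$ while $\|A_{G_1}-\E A_{G_1}\|=\Oh(\sqrt{\log N})$, so Davis--Kahan gives $\|\hat u-u\|_2^2=\Oh(1/\log N)$ and hence, with probability $1-o_N(1)$, $\hat\sigma$ agrees with the planted labeling on all but a set $Z$ with $|Z|=\Oh(N/\log N)=o(N)$ (up to one of the two global sign flips, which is immaterial since the output is an unordered partition).

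In the refinement phase, for every vertex $v$ we use only $G_2$ to compute $d_R(v)=|\{u:\{u,v\}\in E(G_2),\ \hat\sigma(u)=R\}|$ and $d_B(v)$ analogously, and set $\sigma^\ast(v)=R$ iff $d_R(v)\ge d_B(v)$; for $a>b$ this majority rule is exactly the maximum-likelihood test for the label of $v$ given its neighborhood. The point of the split is that $G_2$ is independent of $G_1$, hence of $\hat\sigma$ and of the error set $Z$, so after conditioning on $G_1$ we may treat the $G_2$-neighborhood of $v$ as fresh randomness against a fixed reference set $Z$ of size $o(N)$.

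For the analysis, fix $v$ with true label $R$ and condition on $G_1$. In $G_2$ the number of neighbors of $v$ with true label $R$ is $\mathrm{Bin}(\sim N/2,\ (1-\gamma')a\log N/N)$ and the number with true label $B$ is $\mathrm{Bin}(\sim N/2,\ (1-\gamma')b\log N/N)$, independent; replacing the true labels by $\hat\sigma$ inside these counts perturbs each by at most the number of $G_2$-neighbors of $v$ falling in $Z$, which is $o(\log N)$ simultaneously for all $v$ with probability $1-o_N(1)$ (a Chernoff bound using $|Z|=o(N)$, followed by a union bound over $v$). Poissonizing and applying the Chernoff bound for the difference of two independent Poisson variables of rates $(1-\gamma')a\log N/2$ and $(1-\gamma')b\log N/2$ yields
\[
\Pr[\sigma^\ast(v)\ \text{mislabeled}]\ \le\ N^{-(1-\gamma'-o(1))(\sqrt a-\sqrt b)^2/2}\ \le\ N^{-(2-3\gamma')},
\]
using $\sqrt a-\sqrt b\ge2$; choosing $\gamma'<1/3$ makes this exponent exceed $1$, so a union bound over the $N$ vertices shows that with probability $1-o_N(1)$ no vertex is mislabeled, i.e.\ $\sigma^\ast$ recovers the planted partition exactly. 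For the running time, Phase~1 computes an approximate top eigenvector of an $N\times N$ matrix with $\Oh(N\log N)$ nonzeros; $\Oh(N^{\gamma})$ power iterations suffice even for a small spectral gap, each costing $\Oh(N\log N)$, for a total of $\Oh(N^{1+\gamma})$, while Phase~2 costs $\sum_v\deg_{G_2}(v)=\Oh(N\log N)$; hence the overall bound is $\Oh(N^{1+\gamma})$ for every constant $\gamma>0$.

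The delicate part is the refinement accounting: one must (i) use the edge split to genuinely decouple $\hat\sigma$ from the neighborhood statistics that drive the refinement, (ii) control, uniformly over all $N$ vertices, the contribution of the $o(N)$-size Phase~1 error set so that this $o(\log N)$ perturbation is negligible against the $\Theta(\log N)$ signal gap, and (iii) keep the Poisson large-deviation exponent sharp enough that $\sqrt a-\sqrt b\ge2$ comfortably beats the union bound over vertices — which is exactly why only a vanishing fraction $\gamma'$ of the edges is spent on Phase~1. The Phase~1 almost-exact guarantee itself is comparatively routine, since the average degree diverges.
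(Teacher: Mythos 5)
The paper offers no proof of this statement: it is quoted verbatim from \cite{abbe2015community} and used as a black box (the text even delegates verification to the reader), so the only meaningful comparison is with that cited work. Your two-phase argument --- split the edge set, obtain almost-exact recovery from one part, then relabel every vertex by a local degree/majority test using only the independent part --- is exactly the architecture of the Abbe--Sandon degree-profiling algorithm; the differences are that you run spectral clustering on $G_1$ where they use a ``sphere comparison'' subroutine, and you specialize their general Chernoff--Hellinger divergence condition to the symmetric two-community case, where it collapses to the elementary bound $\Pr[\mathrm{Poi}(\lambda_2)\ge \mathrm{Poi}(\lambda_1)]\le e^{-(\sqrt{\lambda_1}-\sqrt{\lambda_2})^2}$ and hence to the exponent $(1-\gamma')(\sqrt a-\sqrt b)^2/2>1$, which $\sqrt a-\sqrt b\ge 2$ satisfies with room to spare. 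That specialization is legitimate and arguably more transparent than the general statement. The one step you assert rather than prove is the robustness accounting you yourself flag: one must show that a Binomial number of $G_2$-neighbors falling in the Phase-1 error set $Z$ (mean $\epsilon\log N$ with $\epsilon=o(1)$) stays below $\delta\log N$ for all $N$ vertices simultaneously --- Chernoff gives probability about $N^{-\delta\log(\delta/(e\epsilon))}$ per vertex, so the union bound works for any fixed $\delta$ --- and that an additive perturbation of $\delta\log N$ degrades the Poisson large-deviation exponent by only $\Oh(\delta\log(1/\delta))$, which still leaves it above $1$ for $\delta$ small. With that filled in the argument is complete, and your running-time accounting ($\Oh(N\log N)$ nonzeros, polylogarithmically many power iterations, linear-time refinement) comfortably meets the stated $\Oh(N^{1+\gamma})$ bound.
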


We will abuse notation slightly and assume that the algorithm receives as input a set of edges $(u,v)$, with $u,v \in [N]$ and $a$ is much bigger than $b$. The reader can verify that the Degree Profiling succeeds with probability $1 - o_N(1)$ and takes as input $\Omega( N \log N)$ random edges.

Given a set $S$, we are going to reduce the problem of finding the relative signs of the set $\{x_i \}_{i \in S}$ to an instance of the stochastic block model, which is well-studied in the random graphs literature. We will then run the \textsf{Degree Profiling} algorithm just described. The $R$ cluster is going to be the set of coordinates $x_i \geq 0 0$ and the $B$ cluster is going to be the set of coordinates $x_i < 0$. We will discuss details of this model shortly.

\subsection{Construction of the sensing matrix}

We proceed by defining the matrices we are going to make use of. All the constructions are randomized and hence, as mentioned before, induce a distribution over matrices $\Phi$. Before defining $\Phi$, we define the following random matrices:

\begin{itemize}
\item $D \in \mathbb{R}^{n \times n}$ is a diagonal matrix with random signs, that is $\mathbb{P}[D_{ii} = 1] = \mathbb{P}[D_{ii}=-1] = \frac{1}{2}$. Moreover, $D_{ij} = 0$ for $i \neq j$.
\item $A$ is the matrix guaranteed by $Theorem~2$ for $K=10k$.
\item $B \in \mathbb{R}^{ \Oh(k \log n) \times n}$ is the Count-Sketch from \cite{charikar2002finding}. 
\item $E$ is a $\Oh(k \log n) \times n$ matrix, which consists of $\Oh( \log n)$ submatrices $E_1, \ldots , E_{\Oh( \log n)}$, each one having $\Omega(k)$ rows. In each such submatrix, each element of the matrix equals $0$ with probability $1 - \frac{1}{k}$, $+1$ with probability $\frac{1}{2k}$ and $-1$ with probability $\frac{1}{2k}$. In other words, each element is non-zero with probability $\frac{1}{k}$ and if this is the case, it is equally likely to be one of $+1,-1$.
\item $F$ consists of $\lceil \log (5k) \rceil$ submatrices, $F_1, F_{2^1},\ldots, F_{2^l}, \ldots, F_{2^{\lceil \log(5k)} \rceil}$. Each entry of $F_{2^l}$ is non-zero with probability $\frac{1}{C_0 2^{l} (\log(5k) - l+2)^2}$, where $C_0$ is a large enough constant. Each non-zero entry is equally likely to be $+1$ or $-1$. Every matrix $F_{2^l}$ has $\Oh(l 2^l (\log(5k) - l +2)^4)$ rows, where the constant inside the big-Oh depends on $C_0$. 
\end{itemize}

Now let $\Phi'$ be the vertical concatenation of $A,B,E,F$ and set $\Phi = \Phi' D$. 
For convenience, we will slightly abuse notation and when referring to the row $q$ of some of the aforementioned matrices we will say that it refers to measurement $y_q$. The measurement we are referring to will be clear from context.

Let us look now at the description of the \textsf{Cphase} algorithm. The constant $c_1$ and the parameter $\Delta$ will be chosen later.
The variable $L$ serves as an approximation of $\frac{1}{k}\|x_{tail(k)}\|_2^2$. It is upper-bounded by $\Oh(\frac{1}{k}\|x_{\mathrm{tail}(k)}\|_2^2)$, but it is also $\Omega(\frac{1}{k}\|x_{tail(\Omega(k))}\|_2^2)$. This guarantee for $L$ is crucial for two things. Since $L$ is used to filter out coordinates from $S$ that are not large enough, the lower bound ensures that these coordinates will indeed not be present in $S$. On the other hand, it is guaranteed by the upper bound on $L$ that no large enough coordinate will be thrown away, something which guarantees that there is no significant loss of information for our sparse recovery algorithm.\\

\begin{figure}
\begin{center}
\fbox{                                                                                                         
{\footnotesize                                                                                                 
\parbox{7.in} {  
\underline{Algorithm \textsc{Signs}$(S)$}:\\                                                            
\vspace{-.30in}\begin{enumerate}                                                                               
\addtolength{\itemsep}{-0.5mm}
\medskip

\item $ V \leftarrow S, E \leftarrow \emptyset$
\item $l^* \leftarrow \mathrm{argmin}_l \{ l: |S| \leq 2^l\}$.
\item for every row $q$ of $F_{2^{l^*}}$
		\item \qquad if $|supp(q) \cap \ S| =2 $
			
			\item \qquad \qquad  $\{u,v\}  \leftarrow supp(q) \cap S$
			\item \qquad \qquad if  $\sigma_{qu} = \sigma_{qv}$
				\item \qquad \qquad \qquad if  $| y_q - | |\hat{x}_u| +  |\hat{x}_v||| < | y_q - | |\hat{x}_u| - |\hat{x}_v|| |$
					\item \qquad \qquad \qquad \qquad  $ E \leftarrow E \cup \{u, v\}$
				
			\item \qquad \qquad else 
					\item \qquad \qquad \qquad if  $| y_q - | |\hat{x}_u| +  |\hat{x}_v||| > | y_q - | |\hat{x}_u| - |\hat{x}_v|| | $   
					\item \qquad \qquad \qquad \qquad  $ E \leftarrow E \cup \{u, v\}$

\item Degree-Profiling((V,E))

\end{enumerate}                                                                                                
}}}
\end{center}
\caption{Signs Algorithm}\label{fig:signs-alg} 
\end{figure}


\begin{figure}
\begin{center}
\fbox{                                                                                                         
{\footnotesize                                                                                                 
\parbox{7.in} {  
\underline{Algorithm \textsc{Prune}$(S,L)$}:\\                                                            
\vspace{-.30in}\begin{enumerate}                                                                               
\addtolength{\itemsep}{-0.5mm}
\medskip
\item $\{z_i\}_{i \in [|S|] } \leftarrow \{|\hat{x}_i|\}_{i \in S}$
\item Sort all $z_i$ in decreasing order.
\item Find maximum $m \in [|S|]$ such that $|z_m|_2^2 > \frac{k \cdot L}{C_0 \cdot 2^{l_0} (\log(5 k)- l_0 +2)^2}$ (where $l_0$ is such that $ 2^{l_0} < m \leq 2^{l_0+1} )$

\item $T \leftarrow \emptyset$
\item for  $ i \in S$
	\item \qquad if $|\hat{x}_i| \geq z_m$ 
			\item \qquad \qquad $T \leftarrow T \cup \{i\}$
\item return $T$

\end{enumerate}                                                                                                
}}}
\end{center}
 
\caption{\textsf{Prune} Algorithm}\label{fig:prune-alg} 
\end{figure}



\begin{figure}
\begin{center}
\fbox{                                                                                                         
{\footnotesize                                                                                                 
\parbox{7.in} {                                                                                             
\underline{Algorithm \textsc{Cphase}$(\Phi,y)$}:\\                                                            
\vspace{-.30in}\begin{enumerate}                                                                               
\addtolength{\itemsep}{-0.5mm}
\medskip
\item $S_0 \leftarrow ExpanderSketch(y)$
\item $\forall i \in S_0$, $\hat{x}_i \leftarrow CountSketch(i)$
\item $S_1 \leftarrow$  top $5k$ in magnitude coordinates of $S_0$
\item \textbf{for} $l=1$ to $\Oh(\log n)$
\item  \qquad $\Delta \leftarrow 0$
\item  \qquad \textbf{if} $supp(q) \cap S = \emptyset$
\item \qquad \qquad$W_{\Delta} \leftarrow y_q^2$
\item \qquad \qquad $\Delta \leftarrow \Delta +1$

\item \qquad $W \leftarrow \frac{1}{\Delta} \sum_{i=1}^{\Delta} W_i$	
\item \qquad $L_l \leftarrow c_1 \cdot  W$
\item $L \leftarrow median_{l=1}^{\Oh( \log n)} \{ L_l \}$
\item $ S_2 \leftarrow Prune(S_1, L)$
\item $\Sigma_S \leftarrow Signs(S_2)$
\item For each $i$ in $S_2$, $\hat{x}_i \leftarrow \Sigma_S(i) |\hat{x}_{ i}|$ 
\item Output $\hat{x}$

\end{enumerate}                                                                                                
}}}
\end{center} 

\caption{\textsc{Cphase} Algorithm}\label{fig:cphase-alg} 
\end{figure}

We now compute the number of measurements used by our recovery scheme.

\begin{lemma}
\label{lem:numberofrows}
The total number of rows of $\Phi$ is $\Oh(k \log n)$.
\end{lemma}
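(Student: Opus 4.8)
The plan is straightforward: since $\Phi = \Phi' D$ with $D$ a square $n\times n$ matrix, $\Phi$ has exactly as many rows as $\Phi'$, and $\Phi'$ is the vertical stack of $A$, $B$, $E$, and $F$. So I would bound the row count of each of these four blocks by $\Oh(k\log n)$ and add them up.

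Three of the four are immediate from the definitions. The block $A$ is supplied by \Theorem{expandersketch} with $K = 10k$, hence has $\Oh(K\log n) = \Oh(k\log n)$ rows. The block $B$ is the Count-Sketch, which is declared to be an $\Oh(k\log n)\times n$ matrix. The block $E$ is a vertical stack of $\Oh(\log n)$ submatrices, each with $\Oh(k)$ rows, and so has $\Oh(k\log n)$ rows in total. Thus the only real work is the block $F$.

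For $F$, write $L := \lceil \log(5k)\rceil$. The matrix $F$ stacks the submatrices $F_{2^l}$ for $l = 0,\dots,L$ (the exact number of submatrices differs from $L$ by at most one, which is irrelevant to the asymptotics), and $F_{2^l}$ has $\Oh\bigl(l\,2^l (L-l+2)^4\bigr)$ rows. The computation I would carry out is to reverse the summation index, $j := L-l$, so that $2^l = 2^L 2^{-j}$ and $l \le L$, giving
\[
\sum_{l=0}^{L} l\,2^l (L-l+2)^4 \;\le\; L\,2^L\sum_{j=0}^{\infty} 2^{-j}(j+2)^4 \;=\; \Oh\!\bigl(L\,2^L\bigr),
\]
because $\sum_{j\ge0}2^{-j}(j+2)^4$ is a convergent series (a polynomial times a geometric progression of ratio $1/2$) summing to an absolute constant. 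Since $2^L = \Oh(k)$ and $L = \Oh(\log k) = \Oh(\log n)$ (recall that $10k < n$ is forced by the hypothesis of \Theorem{expandersketch}), the block $F$ has $\Oh(k\log n)$ rows. Adding the four bounds completes the proof.

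The one step that requires any thought is the sum over the $F_{2^l}$: the point is that the "dangerous" factor $2^l$ is largest exactly where the polynomial weight $(L-l+2)^4$ is smallest, and after re-indexing the whole sum collapses to a convergent geometric-type series scaled by $L\,2^L = \Oh(k\log n)$. Everything else is just reading dimensions off the construction.
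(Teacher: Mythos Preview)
Your proof is correct and follows essentially the same approach as the paper's: both arguments note that the blocks $A,B,E$ are immediate from their definitions and then bound the rows of $F$ by reversing the summation index (the paper sets $i=\log(5k)-l+2$, you set $j=L-l$) to reduce to a constant multiple of $L\,2^L=\Oh(k\log k)$ via the convergent series $\sum_j (j+2)^4 2^{-j}$. The only cosmetic difference is that the paper records the slightly sharper $\Oh(k\log k)$ bound for $F$ before absorbing it into $\Oh(k\log n)$.
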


\begin{proof}
We only prove that $F$ has $\Oh (k \log k)$ rows, as the number of rows of the other matrices are obvious to calculate. Define $i= \log(5k) - l +2$ and observe that the total number of rows equals 
 \[ \sum_{i=1}^{\log(5k)} ( \log (5k) - i - 2) 2^{\log(5k) -i-1 } \cdot i^4 \leq \Oh( k \log k \sum_{i = 1}^{\infty} \frac{i^4}{2^i}) = \Oh( k \log k),\]

and the proof follows.
\end{proof}

Since we are multiplying $x$ with a random sign matrix $D$ at the beginning, we get a vector $x' = D x$. Observe that the magnitude of each coordinate remains the same, while the sign of its coordinate is now uniformly at random. Given also an approximation of $x'$, we can find an approximation of $x$ by just computing $x = D x'$ in time $supp(x')$. This is a reduction to the case where $sign(x_i)$ are uniformly at random, while $|x_i|$ are adversarially chosen. 

The first step of the decoding algorithm is to run the decoding algorithm from \cite{larsen2016heavy} to get a set $S_0$ containing all $\frac{1}{\sqrt{10k}}$-heavy hitters of $x$, as promised by Theorem~\ref{thm:expandersketch}. Then, we find estimates of all coordinates in $S_0$ using the CountSketch. 

\begin{lemma}{ \cite{charikar2002finding} }
\label{lem:countsketch}
With probability $1 - \frac{1}{poly(n)}$, for any $i \in S_0$, we have that $||\hat{x}_i | - |x_i| |^2 \leq \frac{1}{10k} \|x_{\mathrm{tail}(k)}\|_2^2$.
\end{lemma}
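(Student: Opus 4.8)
The plan is to recognize this as the textbook CountSketch estimation guarantee and to check that having only $|\Phi x|$ instead of $\Phi x$ is harmless. Recall that $B$ stacks $R=\Theta(\log n)$ independent hash tables, the $j$-th mapping $[n]$ into $b=\Theta(k)$ buckets via $h_j$ with independent random signs $\sigma_{j,\cdot}$, and that the estimate is $|\hat x_i|=\mathrm{median}_{j\in[R]}\,|(B^{(j)}x)_{h_j(i)}|$. The bucket $h_j(i)$ holds $\sigma_{j,i}x_i+N_j$ where $N_j=\sum_{j'\neq i:\,h_j(j')=h_j(i)}\sigma_{j,j'}x_{j'}$, so since $|\sigma_{j,i}|=1$ the triangle inequality gives $\big|\,|(B^{(j)}x)_{h_j(i)}|-|x_i|\,\big|\le|N_j|$; thus it suffices to bound $|N_j|$, exactly as in the signed case.

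For one table, fix $i\in[n]$ and let $A$ be the event that no coordinate of $\mathrm{head}(k)\setminus\{i\}$ collides with $i$ under $h_j$; a union bound gives $\mathbb{P}[\lnot A]\le k/b$. Let $z$ be $x$ with the coordinates of $\mathrm{head}(k)\setminus\{i\}$ zeroed out, so that $\|z_{[n]\setminus\{i\}}\|_2^2\le\|x_{\mathrm{tail}(k)}\|_2^2$, and let $N_j'=\sum_{j'\neq i:\,h_j(j')=h_j(i)}\sigma_{j,j'}z_{j'}$. On $A$ we have $N_j=N_j'$, while pairwise independence of the hashes and the zero-mean pairwise-independent signs yields $\mathbb{E}[N_j'^2]=\tfrac1b\|z_{[n]\setminus\{i\}}\|_2^2\le\tfrac1b\|x_{\mathrm{tail}(k)}\|_2^2$. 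Hence, by Markov, $\mathbb{P}\big[N_j^2>\tfrac{t}{b}\|x_{\mathrm{tail}(k)}\|_2^2\big]\le k/b+1/t$. Picking the constant in $b=\Theta(k)$ and a constant $t$ so that $t/b\le\tfrac1{10k}$ and $k/b+1/t\le\tfrac13$, one table satisfies $\big|\,|(B^{(j)}x)_{h_j(i)}|-|x_i|\,\big|^2\le\tfrac1{10k}\|x_{\mathrm{tail}(k)}\|_2^2$ with probability at least $2/3$.

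To finish, set $\rho=\big(\tfrac1{10k}\big)^{1/2}\|x_{\mathrm{tail}(k)}\|_2$ and note that $|\hat x_i|$ fails to lie in $[\,|x_i|-\rho,\,|x_i|+\rho\,]$ only if at least half of the $R$ independent tables read outside that interval, which by a Chernoff bound has probability $e^{-\Omega(R)}=n^{-\Omega(1)}$ once $R=\Theta(\log n)$. A union bound over all $n$ coordinates then makes the estimate accurate simultaneously for every $i\in[n]$, in particular for every $i\in S_0$, with failure probability $n^{-\Omega(1)}$; and the row count $bR=\Oh(k\log n)$ is exactly what was allotted to $B$. The only subtlety worth flagging is the removal of the head contribution — handled cleanly by replacing $x$ with $z$ rather than by conditioning on $A$, which keeps the pairwise-independence computations valid — while the passage from signed to magnitude readings and the median amplification are entirely routine.
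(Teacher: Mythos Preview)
Your argument is correct and is precisely the standard CountSketch analysis that the citation points to; the paper itself does not give a proof but simply attributes the lemma to \cite{charikar2002finding}. Your only addition beyond the cited result is the (correct and necessary) observation that the median-of-absolute-values estimator is insensitive to losing the signs of $\Phi x$, which the paper notes elsewhere in passing.
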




The next step is to keep the largest $5k$ coordinates of $x$ and obtain the set $S_1$. What we prove now is that this operation does not introduce a lot of error and approximating coordinates in $x$ still suffices for our approximation guarantee. For reasons that will be clear later, we would like $\| x_{\bar{S_1}}\|_2$ not to be much bigger that $\|x_{tail(k)}\|_2$, nor much less from $\|x_{\mathrm{tail}(10 k)}\|_2$. The following lemma gives this type of guarantee.

\begin{lemma}
\label{lem:estimation}
With probability $1 - \frac{1}{poly(n)}$, it holds that $1.2\|x_{\mathrm{tail}(k)}\|_2^2 \geq \|x_{\bar{S_1}}\|_2^2 \geq \|x_{\mathrm{tail}(10 k)}\|_2^2$.
\end{lemma}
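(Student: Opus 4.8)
The plan is to treat the two inequalities by completely different means. The lower bound $\|x_{\bar{S_1}}\|_2^2\ge\|x_{\mathrm{tail}(10k)}\|_2^2$ is deterministic and needs no randomness: $|S_1|\le 5k$ always (it equals $5k$ unless $|S_0|<5k$, in which case $S_1=S_0$), and for any index set $T$ with $|T|\le 5k$ we have $\|x_{[n]\setminus T}\|_2^2\ge\|x_{\mathrm{tail}(5k)}\|_2^2\ge\|x_{\mathrm{tail}(10k)}\|_2^2$, since deleting at most $5k$ coordinates cannot remove more $\ell_2$ mass than deleting the $5k$ largest ones; apply this with $T=S_1$.

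For the upper bound I would condition on the success events of Theorem~\ref{thm:expandersketch} (with $K=10k$) and Lemma~\ref{lem:countsketch}, which jointly hold with probability $1-1/\mathrm{poly}(n)$. Write $\eta:=\tfrac1{\sqrt{10k}}\|x_{\mathrm{tail}(k)}\|_2$, so $\big||\hat x_i|-|x_i|\big|\le\eta$ for all $i\in S_0$, and every index with $|x_i|>\eta$ is a $\tfrac1{\sqrt{10k}}$-heavy hitter (using $\|x_{\mathrm{tail}(10k)}\|_2\le\|x_{\mathrm{tail}(k)}\|_2$) and hence lies in $S_0$. The first step is to find a constant $c_0$ such that $|x_i|>c_0\eta$ forces $i\in S_1$: such an $i$ lies in $S_0$, and if it were not among the $5k$ largest estimated coordinates of $S_0$ then at least $5k$ indices $j\in S_0$ would satisfy $|\hat x_j|\ge|\hat x_i|$, at most $k$ of them in $\mathrm{head}(k)$, hence at least $4k$ of them in $\mathrm{tail}(k)$ with $|x_j|\ge|\hat x_j|-\eta\ge|x_i|-2\eta>(c_0-2)\eta$; then $4k(c_0-2)^2\eta^2\le\|x_{\mathrm{tail}(k)}\|_2^2=10k\eta^2$ would force $(c_0-2)^2\le 2.5$, a contradiction for $c_0$ large enough. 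Consequently the only coordinates of $\mathrm{head}(k)$ that can be missing from $S_1$ have magnitude $\Oh(\eta)=\Oh(k^{-1/2}\|x_{\mathrm{tail}(k)}\|_2)$.

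To make this quantitative I would use the identity
\[\|x_{\bar{S_1}}\|_2^2=\|x_{\mathrm{tail}(k)}\|_2^2+\sum_{i\in\mathrm{head}(k)\setminus S_1}x_i^2-\sum_{j\in S_1\cap\mathrm{tail}(k)}x_j^2,\]
which always holds, together with $|S_1\cap\mathrm{tail}(k)|=|S_1|-|S_1\cap\mathrm{head}(k)|\ge 4k$ in the case $|S_0|\ge 5k$. The case $|S_0|<5k$ is immediate: then $S_1=S_0$ contains every heavy hitter, so $\bar{S_1}$ only misses coordinates of magnitude $\le\eta$, including at most $k$ head coordinates, whence $\|x_{\bar{S_1}}\|_2^2\le 1.1\,\|x_{\mathrm{tail}(k)}\|_2^2$. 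In the main case, a head coordinate absent from $S_1$ is either not a heavy hitter — so of squared magnitude $\le\eta^2$ — or lies in $S_0\setminus S_1$, and in the latter case every one of the $\ge 4k$ tail coordinates in $S_1$ has $|x_j|\ge|x_i|-2\eta$ because its estimate beats $|\hat x_i|$. Thus the tail coordinates that take the place of the missing head coordinates carry $\ell_2$ mass comparable to theirs; bounding $\sum_{j\in S_1\cap\mathrm{tail}(k)}x_j^2$ from below by $(5k-t)(M-2\eta)^2$, with $t=|S_1\cap\mathrm{head}(k)|$ and $M$ the largest magnitude among the missing head coordinates lying in $S_0$, and $\sum_{i\in\mathrm{head}(k)\setminus S_1}x_i^2$ from above by $\eta^2$ per missing non-heavy-hitter and $M^2$ per element of $S_0\setminus S_1$, a short optimization of the resulting quadratic in $M$ bounds the difference of the last two sums by a small constant times $\|x_{\mathrm{tail}(k)}\|_2^2$, yielding $\|x_{\bar{S_1}}\|_2^2\le 1.2\,\|x_{\mathrm{tail}(k)}\|_2^2$.

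The hard part is exactly this last, quantitative, step. The crude estimate ``each of the $\le k$ missing head coordinates has squared magnitude $\Oh(k^{-1}\|x_{\mathrm{tail}(k)}\|_2^2)$'' only gives $\|x_{\bar{S_1}}\|_2^2=\Oh(\|x_{\mathrm{tail}(k)}\|_2^2)$ with a constant well above $1.2$, so one genuinely has to extract the cancellation from the $-\sum_{j\in S_1\cap\mathrm{tail}(k)}x_j^2$ term coming from the tail coordinates that displaced the missing head coordinates, and carry the constants carefully through the optimization in $M$; the factor $5$ in ``top $5k$'' is tuned to make this balance. If the constants refuse to close at $1.2$, it suffices to run \textsf{CountSketch} at accuracy $\tfrac1{ck}$ for a large enough constant $c$ in place of $\tfrac1{10k}$, which costs only a constant factor in the $\Oh(k\log n)$ measurement budget. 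The remaining degenerate sub-cases ($M<2\eta$, i.e.\ all missing head coordinates tiny, and $|S_0|<5k$) are handled directly as above.
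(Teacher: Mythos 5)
Your lower bound is essentially the paper's: the paper writes $\|x_{\bar{S_1}}\|_2^2 \ge \sum_{j\notin \mathrm{head}(k)\cup S_1}x_j^2 \ge \|x_{\mathrm{tail}(10k)}\|_2^2$, the same deterministic counting argument with the set $\mathrm{head}(k)\cup S_1$ (of size at most $6k$) in place of your $S_1$; both are correct. The real divergence is the upper bound. The paper disposes of it in one line by citing Theorem~3.1 of \cite{price11}, which is exactly the statement that selecting the top $\Theta(k)$ coordinates of an estimate with per-coordinate error $\sqrt{\eps/k}\,\|x_{\mathrm{tail}(k)}\|_2$ loses only a $(1+O(\eps))$ factor over the best $k$-term approximation. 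You are reproving that theorem from scratch, and your skeleton --- condition on the ExpanderSketch and CountSketch events, show every coordinate above $c_0\eta$ survives into $S_1$, then use the exchange identity $\|x_{\bar{S_1}}\|_2^2=\|x_{\mathrm{tail}(k)}\|_2^2+\sum_{i\in\mathrm{head}(k)\setminus S_1}x_i^2-\sum_{j\in S_1\cap\mathrm{tail}(k)}x_j^2$ together with the fact that the $\ge 4k$ tail coordinates admitted into $S_1$ each beat the displaced head coordinates up to $2\eta$ --- is the right way to prove the cited result and is more self-contained than the paper's citation.

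However, the step you yourself flag as ``the hard part'' is where your proof has a genuine gap, and with the paper's stated parameters it does not close at $1.2$. Carrying out your optimization with $\eta^2=\frac{1}{10k}\|x_{\mathrm{tail}(k)}\|_2^2$: the head coordinates in $S_0\setminus S_1$ contribute at most $kM^2$, the displaced tail coordinates give back at least $4k(M-2\eta)^2$, and
\[
\max_M\bigl\{M^2-4(M-2\eta)^2\bigr\}=\tfrac{16}{3}\eta^2,
\]
attained at $M=\tfrac{8}{3}\eta$, which is feasible since the only constraint, $4k(M-2\eta)^2\le 10k\eta^2$, allows $M$ up to $(2+\sqrt{2.5})\eta$. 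One can in fact realize a vector and estimates consistent with Theorem~\ref{thm:expandersketch} and Lemma~\ref{lem:countsketch} (all $k$ head coordinates at magnitude $\tfrac{8}{3}\eta$ with estimates deflated by $\eta$, a flat tail with $4k$ coordinates inflated by $\eta$) for which the difference of the two sums is about $\tfrac{16}{3}k\eta^2\approx 0.53\,\|x_{\mathrm{tail}(k)}\|_2^2$; adding the up-to-$k\eta^2$ contribution from head coordinates that miss $S_0$ entirely, your argument only yields $\|x_{\bar{S_1}}\|_2^2\le 1.6\,\|x_{\mathrm{tail}(k)}\|_2^2$ or so. Your proposed repair --- sharpening only the CountSketch accuracy to $\tfrac{1}{ck}$ --- shrinks the $M$-term but not the $k\eta^2$ term, which comes from the ExpanderSketch threshold $K=10k$ rather than from the estimation error; you would need to increase $K$ as well, or invoke the sharper accounting (error measured against $\|x_{\mathrm{tail}(\Theta(k))}\|_2$ rather than a worst-case $\ell_\infty$ bound) that \cite{price11} actually performs. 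Since the final theorem only claims an absolute constant $C$, and Lemma~\ref{lem:massremains} merely propagates $1.2$ to $1.3$, this is a reparable constant-chasing gap rather than a conceptual error --- but as written the decisive quantitative step is asserted, not proved, and the specific constant $1.2$ does not follow from the estimates you set up.
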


\begin{proof}

The left-hand side follows by Theorem 3.1 in \cite{price11}.
For the right-hand side, observe that 
\[ \|x_{\bar{S}_1} \|_2^2 \geq \sum_{j \notin \mathrm{head}(k) \cup S_1} x_j^2 \geq \|x_{\mathrm{tail}(10k)}\|_2^2\]

\end{proof}

Our goal now is to find the relative signs inside $S_1$. The idea behind this estimation is to build a graph, where there is an edge between two coordinates $i,j \in S_1$ if there is a row  $q$ of $F$ such that $supp(q) \cap S_1 =\{i, j \}$. This means that these are the only two coordinates in $S$ that contribute to the measurement defined by this row, and hence we hope to extract some information about the relative signs of $i$ and $j$. However, the presence of coordinates of very small magnitude in the set $S$ might lead to wrong estimations, due to the noise coming from coordinates not in $S_1$. This is why we would like to throw away from $S_1$ all $i$ such that $|x_i|$ is `small': so that both $|x_i|$ and $|x_j|$ are above the (average) `noise' level. Here, the noise is the $\ell_2$ mass coming into a measurement from coordinates not in $S_1$. This implies that we need to prune the set $S_1$ to get $S_2$.

In order to implement this pruning, we need to find a threshold that defines if a coordinate is heavy or not. Ideally, we would like this threshold to be multiplicative in $\frac{1}{\sqrt{k}}\|x_{\mathrm{tail}(k)}\|_2$. Unfortunately, this type of guarantee is not possible, but we can find a threshold that still can make our algorithm go through. To compute this threshold we implement the following procedure using the matrices $E_l$: for each matrix, we keep any row that has empty intersection with every element in $S_1$, then we average these measurements and normalize them by a suitable constant $c_1$ to get $L_l$; we then take the median of $L_l$ to get $L$. This is implemented in lines $4-11$ of the \textsf{CPhase} algorithm. The following lemma holds.

\begin{lemma}
\label{lem:cheby}
With probability $1 - \frac{1}{poly(n)}$, we have that $ \frac{1}{k} \|x_{\mathrm{tail}(k)}\|_2^2 \geq L \geq C' \frac{1}{k}\|x_{\mathrm{tail}(10k)}\|_2^2$, where $C'$ is an absolute constant.
\end{lemma}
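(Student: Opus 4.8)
The plan is to control the quantity $L$ computed in lines~4--11 of \textsc{Cphase} by analysing one submatrix $E_l$ at a time and then amplifying via the median. Write $\nu := \tfrac1k\|x_{\mathrm{tail}(k)}\|_2^2$, $\nu' := \tfrac1k\|x_{\mathrm{tail}(10k)}\|_2^2$ and $\mu := \tfrac1k\|x_{\bar{S_1}}\|_2^2$; by \Lemma{estimation}, $\nu'\le \mu\le 1.2\,\nu$. I condition throughout on the success events of \Theorem{expandersketch}, \Lemma{countsketch} and \Lemma{estimation}, which hold simultaneously with probability $1-1/\mathrm{poly}(n)$. The goal is to show that for each fixed $l$ the value $L_l = c_1 W_l$ lies in $[C'\nu',\nu]$ with probability at least $3/4$. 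Since the submatrices $E_l$ are independent, a Chernoff bound over the $\Oh(\log n)$ values $L_1,\dots,L_{\Oh(\log n)}$ then shows that more than half of them lie in this interval with probability $1-2^{-\Omega(\log n)} = 1-1/\mathrm{poly}(n)$, hence so does their median $L$, which is exactly the claimed bound.

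Fix $E_l$ and let $\Delta_l$ be the number of its rows whose support misses $S_1$. Each entry of $E_l$ is nonzero independently with probability $1/k$ and $|S_1|\le 5k$, so a row misses $S_1$ with constant probability; since $E_l$ has $\Omega(k)$ rows, a Chernoff bound gives $\Delta_l\ge D$ with probability $\ge 0.99$ for some $D = \Theta(k)$, and by taking the hidden constant in the row count of $E_l$ large we may take $D/k$ as large an absolute constant as we wish (at the cost only of the constant in the $\Oh(k\log n)$ row count). Conditioned on which rows these are, the values $W_1,\dots,W_{\Delta_l}$ collected by the algorithm are i.i.d.; for such a row its signed $1/k$-sparse entry pattern $\eta$ satisfies $\E[\eta_j]=0$ and $\E[\eta_j^2]=1/k$, so $\E[W_i]=\sum_{j\notin S_1}x_j^2\,\E[\eta_j^2]=\mu$ and hence $\E[W_l]=\mu$. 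Expanding the fourth moment of the Rademacher-weighted sum $\sum_{j\notin S_1}\eta_j x_j$ gives
\[
\mathrm{Var}\big(W_l\mid \Delta_l=d\big)\ \le\ \frac1d\,\Oh\!\Big(\tfrac1k\|x_{\bar{S_1}}\|_4^4+\mu^2\Big),
\]
and I bound $\|x_{\bar{S_1}}\|_4^4$ in two ways. Trivially $\|x_{\bar{S_1}}\|_4^4\le\|x_{\bar{S_1}}\|_2^4$, which yields $\mathrm{Var}(W_l)\le\Oh(k\mu^2/d)$. Alternatively $\|x_{\bar{S_1}}\|_4^4\le\|x_{\bar{S_1}}\|_\infty^2\|x_{\bar{S_1}}\|_2^2$; here the key estimate is
\[
\|x_{\bar{S_1}}\|_\infty^2\ \le\ \tfrac1{4k}\|x_{\mathrm{tail}(k)}\|_2^2 ,
\]
and, writing the sum defining $\|x_{\bar{S_1}}\|_4^4$ as the contribution of the $\le|S_0|=\Oh(k)$ coordinates of $S_0\setminus S_1$ (each of squared magnitude at most $\|x_{\bar{S_1}}\|_\infty^2\le\nu/4$) plus that of the remaining coordinates (which lie outside $S_0$, hence are non-heavy-hitters of squared magnitude at most $\tfrac1{10k}\|x_{\mathrm{tail}(10k)}\|_2^2$), one gets $\tfrac1k\|x_{\bar{S_1}}\|_4^4=\Oh(\nu^2)$ and so $\mathrm{Var}(W_l)\le\Oh(\nu^2/d)$.

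Chebyshev's inequality now delivers both halves. For the upper bound, the $\Oh(\nu^2/d)$ variance bound with $d=\Delta_l=\Theta(k)$ gives $W_l\le\mu+\Oh(\nu/\sqrt d)\le\Oh(\nu)$ with probability $\ge 0.9$ (using $\mu\le1.2\nu$), so choosing $c_1$ a small enough absolute constant gives $L_l=c_1W_l\le\nu$. For the lower bound, taking $D/k$ a large enough absolute constant makes the $\Oh(k\mu^2/d)$ bound read $\mathrm{Var}(W_l)\le\mu^2/100$, so $W_l\ge\mu/2$ with probability $\ge 0.9$; since $\mu\ge\nu'$ this gives $L_l=c_1W_l\ge\tfrac{c_1}{2}\nu'=:C'\nu'$. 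A union bound over $\{\Delta_l<D\}$ and the two Chebyshev failure events puts $L_l\in[C'\nu',\nu]$ with probability $\ge 3/4$, and the median argument above finishes the proof.

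The one delicate point is the lower tail. A priori the samples $y_q^2$ are poorly concentrated: if $\|x_{\bar{S_1}}\|_2^2$ is concentrated in a single coordinate of $\bar{S_1}$ --- which is allowed, with magnitude up to $\tfrac1{\sqrt{4k}}\|x_{\mathrm{tail}(k)}\|_2$ --- then a single $y_q^2$ has variance of order $k\mu^2$, far exceeding $\mu^2$, so without enough averaging Chebyshev's lower-tail bound is vacuous. This is exactly why the construction gives each $E_l$ as many as $\Omega(k)$ rows with a large hidden constant: averaging $\Delta_l=\Theta(k)$ i.i.d.\ samples, with $\Delta_l/k$ a large constant, divides the variance by $\Theta(k)$ and drives $\mathrm{Var}(W_l)$ down to $\Oh(\mu^2/C)$ for $C$ arbitrarily large, which is what makes the lower tail usable --- and it is also why the lemma settles for a two-sided $\mathrm{tail}(k)$/$\mathrm{tail}(10k)$ sandwich rather than a sharp estimate of $\tfrac1k\|x_{\mathrm{tail}(k)}\|_2^2$. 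The remaining technical ingredient is the estimate $\|x_{\bar{S_1}}\|_\infty^2\le\tfrac1{4k}\|x_{\mathrm{tail}(k)}\|_2^2$: any $i\in S_0\setminus S_1$ has magnitude at most $\min_{j\in S_1}|x_j|$, and at least $4k$ of the $5k$ coordinates of $S_1$ --- all of magnitude at least that value --- lie in $\mathrm{tail}(k)$, while a coordinate of $\bar{S_1}$ lying outside $S_0$ is, by \Theorem{expandersketch}, not a heavy hitter and hence even smaller.
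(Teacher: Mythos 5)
Your proof is correct and follows essentially the same route as the paper's: condition on $\Theta(k)$ rows of $E_l$ avoiding $S_1$ via Chernoff, compute the mean $\tfrac1k\|x_{\bar{S_1}}\|_2^2$ and a fourth-moment variance bound for the empirical average, apply Chebyshev, sandwich via \Lemma{estimation}, and amplify with the median over the $\Oh(\log n)$ repetitions. The only difference is that your $\|x_{\bar{S_1}}\|_4^4\le\|x_{\bar{S_1}}\|_\infty^2\|x_{\bar{S_1}}\|_2^2$ refinement is not needed: the paper handles both tails with the trivial bound $\sum x_i^4\le\|x_{\bar{S_1}}\|_2^4$ by taking the constant in $\Delta=C_1k$ large, exactly as you do for your lower tail.
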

\begin{proof}
 We focus on a specific matrix $E_l$ and show that the desired inequalities hold with constant probability. Taking the median of all $\Oh( \log n)$ values ensures high probability. 

The support of some row of $E_l$ with constant probability has no intersection with $S_1$. After considering $\Omega( k )$ rows, a constant fraction of the rows will have support not intersecting $S_1$ by a Chernoff bound with probability at least $1 - e^{-\Omega(k)}$ . Consider such rows $r_1,..,r_\Delta$ where $\Delta =  C_1 k$, where $C_1$ is a constant to be chosen later, and define \[W_{l,q} = | \sum_{i=0}^n \delta_{qi} \sigma_{qi} x_i|^2\] for the measurement corresponding to row $r_q$. Since we have conditioned on $S_1$ having empty intersection with any such row, then $\delta_{qj} = 0$ for $ j \in S_1$. Let $\mathcal{E}_q$ be the event that $\delta_{qj} = 0$ for all $j \in S_1$. \\

 We get that \[\mathbb{E} [W_{l,q} | \mathcal{E}_q] = \frac{ 1}{ k} \|x_{\bar{S_1}}\|_2^2\] and 
 \[\mathbb{E} [ (\frac{1}{\Delta} (W_{l,1} + W_{l,2} +.. +W_{l,\Delta}))^2| \mathcal{E} ] = \Delta^{-2}(\frac{1}{ k}\  \sum_{i \in [n] - S} x_i^4 + \frac{1}{k^2}  \sum_{i \neq j, i,j \in [n] - S_1} x_i^2 x_j^2) \leq \Delta^{-2} \frac{1}{ k} \|x_{\bar{S_1}}\|_2^4 \]
 
Moreover, \[ \mathbb{P} [ |\frac{1}{\Delta}[ W_{l,1} + W_{l,2} +.. + W_{l,\Delta}  - \frac{1}{k} \|x_{\bar{S_1}}\|_2^2 |\geq 0.1\frac{1}{k}\|x_{\bar{S_1}}\|_2^2| \mathcal{E}] \leq \frac{\Delta^{-2} \frac{1}{ k} \|x_{\bar{S_1}}\|_2^4}{ 0.1^2 \frac{1}{ k^2}\|x_{\bar{S_1}}\|_2^4} \leq \frac{1}{200},\]for big enough choice of $\Delta = C_1 k$. This means that using each matrix $E_l$ we can find a value $W$ which is within $ (1 \pm 0.1)\frac{1}{k} \|x_{\bar{S}_1} \|_2^2$ with constant probability. This, along with $Lemma~3$ implies that with  $L_l$ will satisfy the guarantee of the lemma with constant probability. Since we are taking the median of all $L_l$ the aforementioned guarantee will hold with probability $1- \frac{1}{poly(n)}$. 
\end{proof}

After the estimation of the threshold, we are going to run the \textsf{Prune} procedure. This procedure takes as input the set $S_1$ and the threshold $L$, and returns the set $S_2$, which is a subset of $S_1$.
We need the following lemma, which indicates that, after the \textsf{Prune} procedure returns, the $\ell_2$ mass of $x_{\bar{S_2}}$ is comparable to the $\ell_2$ mass of $x_{\mathrm{tail}(k)}$. This lemma is helpful both for finding the relative signs, but also for bounding the error of our approximation of $x$. For definitions of $l_0$ we remind the reader to look at the pseudocode of the \textsf{Prune} algorithm. 

Why do we need the \textsf{Prune} procedure?
Our initial goal is to find the relative signs among elements in the set we have in our hands. Depending on the size of the set, we restrict our attention to the corresponding matrix: the matrix $F_{2^l}$, where $2^l$ is the least positive integer which is not smaller than $|S_2|$. Each such matrix is essentially a sampling procedure: it samples a number of random pairs (edges) from $S_2$. If every entry of $F_{2^l}$ is non-zero with probability $p$, this means that $ { |S_2| \choose 2 }p^2(1-p)^{ |S_2|-2}$ is roughly the probability that a pair in $S_2$ is sampled. Consider now that the size of $S_2$ is very small, let us say constant. Then if $p$ is very small, we need a lot of rows till we sample a pair from $S_2$. If $p$ is large enough, then the expected noise coming into each measurement will be too big and hence most of the time we will not be able to deduce any sign information from that measurement. 

The \textsf{Prune} procedure, hence, keeps only a subset of coordinates of $S_1$ by trying to balance three things: the expected `noise' coming into each measurement (which affects the information we get from that measurement), the time till a pair is sampled (which affects the number of measurements) and the error introduced by throwing out coordinates with lower mass (which affects the approximation guarantee of our scheme). We move on with the following lemma that treats the last of the three aforementioned facts.

\begin{lemma}
\label{lem:massremains}
 With probability $1 - \frac{1}{poly(n)}$, $\|x_{\bar{S_2}}\|_2^2 \leq 1.3 \|x_{\mathrm{tail}(k)} \|_2^2$. \\

\end{lemma}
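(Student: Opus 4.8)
The plan is to split $\|x_{\bar{S_2}}\|_2^2$ according to whether a coordinate lies outside $S_1$ or was thrown away by \textsf{Prune}, and then control the discarded mass using the specially chosen pruning thresholds together with \Lemma{cheby}. Since $S_2\subseteq S_1$ the supports of $x_{\bar{S_1}}$ and $x_{S_1\setminus S_2}$ are disjoint, so
\[ \|x_{\bar{S_2}}\|_2^2 \;=\; \|x_{\bar{S_1}}\|_2^2 \;+\; \|x_{S_1\setminus S_2}\|_2^2 \;\le\; 1.2\,\|x_{\mathrm{tail}(k)}\|_2^2 \;+\; \|x_{S_1\setminus S_2}\|_2^2 \]
by \Lemma{estimation}. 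Hence it suffices to prove $\|x_{S_1\setminus S_2}\|_2^2 \le 0.1\,\|x_{\mathrm{tail}(k)}\|_2^2$, with the slack absorbed by taking $C_0$ a large enough constant. Throughout we condition on the $1-1/\mathrm{poly}(n)$-probability events of \Lemma{countsketch}, \Lemma{estimation} and \Lemma{cheby}.

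The core step is to bound $\sum_{i\in S_1\setminus S_2}|\hat x_i|^2$. A coordinate of $S_1$ is discarded exactly when its rank $j$ in the sorted list $z_1\ge z_2\ge\cdots$ exceeds the cutoff $m$, and by maximality of $m$ every such rank $j$ fails the test of line~3 of \textsf{Prune}, i.e. $z_j^2 \le \frac{kL}{C_0\,2^{l_0(j)}(\log(5k)-l_0(j)+2)^2}$ where $l_0(j)$ satisfies $2^{l_0(j)}<j\le 2^{l_0(j)+1}$. Grouping the discarded ranks by the value of $l_0(j)$: level $l$ contains at most $2^l$ of them, each contributing at most $\frac{kL}{C_0\,2^l(\log(5k)-l+2)^2}$, so level $l$ contributes at most $\frac{kL}{C_0(\log(5k)-l+2)^2}$ and, summing the resulting convergent series,
\[ \sum_{i\in S_1\setminus S_2}|\hat x_i|^2 \;\le\; \frac{kL}{C_0}\sum_{l}\frac{1}{(\log(5k)-l+2)^2} \;\le\; \frac{\pi^2}{6}\cdot\frac{kL}{C_0}. \]
By \Lemma{cheby} we have $L\le \frac1k\|x_{\mathrm{tail}(k)}\|_2^2$, so $\sum_{i\in S_1\setminus S_2}|\hat x_i|^2 \le \frac{\pi^2}{6C_0}\|x_{\mathrm{tail}(k)}\|_2^2$, which is negligible once $C_0$ is large. (This is precisely why the matrices $F_{2^l}$ and the pruning thresholds were given the $2^l(\log(5k)-l+2)^2$ scaling: it makes exactly this sum converge.)

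It remains to pass from the sketch estimates to the true values on $S_1\setminus S_2$. By Minkowski's inequality,
\[ \|x_{S_1\setminus S_2}\|_2 \;\le\; \Big(\sum_{i\in S_1\setminus S_2}|\hat x_i|^2\Big)^{1/2} \;+\; \Big(\sum_{i\in S_1\setminus S_2}\big(|\hat x_i|-|x_i|\big)^2\Big)^{1/2}, \]
where the first term is $O(C_0^{-1/2})\|x_{\mathrm{tail}(k)}\|_2$ by the previous paragraph. For the second term I would invoke the joint $\ell_2/\ell_2$ (``top-$O(k)$ tail error'') guarantee of Count-Sketch, in the spirit of Theorem~3.1 of \cite{price11} but applied with $\Theta(k)$ buckets so that it holds simultaneously for every set of $O(k)$ coordinates: with probability $1-1/\mathrm{poly}(n)$, the total squared point-query error over any such set is at most a small constant times $\|x_{\mathrm{tail}(k)}\|_2^2$. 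Choosing that constant below $0.05$ and $C_0$ large, the two displays combine to give $\|x_{S_1\setminus S_2}\|_2^2 \le 0.1\,\|x_{\mathrm{tail}(k)}\|_2^2$, and adding the $1.2\,\|x_{\mathrm{tail}(k)}\|_2^2$ from \Lemma{estimation} completes the proof.

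The only step that is more than routine is this last one. Summing the per-coordinate bound of \Lemma{countsketch} naively over the up-to-$5k$ discarded coordinates costs $\Theta(1)\cdot\|x_{\mathrm{tail}(k)}\|_2^2$, with a constant too large to fit inside $1.3$; and $S_1\setminus S_2$ is data-dependent, being a function of the very sketch used to produce $\hat x$. Both difficulties are handled by replacing the coordinate-wise union bound with the stronger ``top-$O(k)$'' error property of Count-Sketch (which is exactly what the $\Theta(k\log n)$ rows of $B$ buy us) and then taking a worst-case maximum over all $O(k)$-subsets to dispose of the adaptivity.
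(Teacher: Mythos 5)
Your proof follows the same skeleton as the paper's: decompose $\|x_{\bar{S_2}}\|_2^2$ into the mass outside $S_1$ (bounded by $1.2\|x_{\mathrm{tail}(k)}\|_2^2$ via Lemma~\ref{lem:estimation}) plus the mass discarded by \textsf{Prune}, then bound the latter by grouping discarded ranks into dyadic levels, summing the thresholds $\frac{kL}{C_0 2^{l}(\log(5k)-l+2)^2}$ over levels to get a convergent $\sum 1/i^2$ series, and finally applying the upper bound $L\le\frac{1}{k}\|x_{\mathrm{tail}(k)}\|_2^2$ from Lemma~\ref{lem:cheby}. Where you genuinely diverge is the last step: the paper's one-line proof silently identifies the true values $x_i$ of the discarded coordinates with the thresholds, even though \textsf{Prune} thresholds the \emph{estimates} $|\hat{x}_i|$, not $|x_i|$; you explicitly bridge this gap via Minkowski plus a joint bound on $\sum_{i\in S_1\setminus S_2}(|\hat x_i|-|x_i|)^2$. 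This is a real improvement in rigor --- as you note, summing the $\frac{1}{10k}\|x_{\mathrm{tail}(k)}\|_2^2$ per-coordinate error of Lemma~\ref{lem:countsketch} over $5k$ coordinates would only yield a constant around $1.7$, not $1.3$. One simplification is available to you, though: you do not actually need a new ``simultaneously over all $O(k)$-subsets'' property of Count-Sketch. Since the per-coordinate error bound of Lemma~\ref{lem:countsketch} already holds for \emph{all} $i$ simultaneously with probability $1-1/\mathrm{poly}(n)$, instantiating $B$ with a larger constant number of buckets per row (so the per-coordinate error becomes $\frac{\epsilon}{k}\|x_{\mathrm{tail}(k)}\|_2^2$ for a small constant $\epsilon$) makes the naive sum over the at most $5k$ discarded coordinates at most $5\epsilon\|x_{\mathrm{tail}(k)}\|_2^2$, which disposes of the adaptivity of $S_1\setminus S_2$ without any maximum over subsets. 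With that reading, your argument is correct and is the version of the proof that should arguably appear in the paper.
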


\begin{proof}

It holds that 

\[  \|x_{\bar{S}_2} \|_2^2 \leq 1.2 \|x_{\mathrm{tail}(k)}\|_2^2 + \sum_{l_0=1}^{\lceil \log(5k) \rceil} 2^l \frac{k L}{C_0 2^l (\log(5k) - l_0 +2)^2} \leq \]

\[\|x_{\mathrm{tail}(k)}\|_2^2 + \frac{kL}{C_0} \sum_{i=1}^{\infty} \frac{1}{i^2} \leq  1.3 \|x_{\mathrm{tail}(k)}\|_2^2 , \]

if $C_0$ is sufficiently large.

\end{proof}

Now, let $2^{l^*}$ be the smallest power of $2$ that exceeds $|S_2|$. We focus on the matrix $F_{2^{l^*}}$. We initialize an empty graph $G$ with vertex $S_2$. We keep each row $q$ whose support has intersection exactly $2$ with $S_2$. Then, if $\sigma_{q,u} =\sigma_{q,v}$ and $| y_q - | |\hat{x}_u| +  |\hat{x}_v||| <  | y_q - | |\hat{x}_u| - |\hat{x}_v|| |$ we add an edge $\{u,v\}$ to $G$. Also we add an edge $\{u,v\}$ to $G$ if $\sigma_{q,u} \neq \sigma_{q,v}$ and $| y_q - | |\hat{x}_u| +  |\hat{x}_v||| >  | y_q - | |\hat{x}_u| - |\hat{x}_v|| |$. Then, we run the \textsf{Degree Profiling} algorithm on $G$ and get a clustering of $S_2$. We arbitrarily then assign positive signs to coordinates belonging to one cluster returned by the algorithm and negative signs to coordinates belonging to the other cluster. We then output the vector supported on $S_2$ with estimates of coordinates we obtained from \textsf{CountSketch} and the signs implied by the aforementioned clustering algorithm.

\begin{lemma}
\label{lem:findsigns}
With probability at least $1 - o_{|S_2|}(1)$, the algorithm will find correctly the relative signs of all elements in the set $S_2$.
\end{lemma}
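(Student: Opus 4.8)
The plan is to verify that the graph $G$ built inside \textsf{Signs}$(S_2)$ stochastically dominates, edge by edge, an instance of $SBM(|S_2|,a,b)$ with $\sqrt{a}-\sqrt{b}\geq 2$, so that \Theorem{signs} applies. First I would fix the relevant matrix $F_{2^{l^*}}$, where $2^{l^*-1} < |S_2| \le 2^{l^*}$, and compute, for a fixed pair $\{u,v\}\subseteq S_2$, the probability that a given row $q$ of $F_{2^{l^*}}$ has $\mathrm{supp}(q)\cap S_2 = \{u,v\}$. Writing $p = \frac{1}{C_0 2^{l^*}(\log(5k)-l^*+2)^2}$ for the per-entry nonzero probability, this probability is $p^2(1-p)^{|S_2|-2}$, which is $\Theta(p^2)$ since $|S_2|p = \Theta(1/(\log(5k)-l^*+2)^2) = o(1)$. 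Multiplying by the number of rows $\Theta(l^* 2^{l^*}(\log(5k)-l^*+2)^4)$ shows that the expected number of rows isolating the pair $\{u,v\}$ is $\Theta\!\left(\frac{l^*}{2^{l^*}}\right) = \Theta\!\left(\frac{\log |S_2|}{|S_2|}\right)$, which is exactly the right order for an $SBM$ on $|S_2|$ vertices; a Poissonization/Chernoff argument then says that with probability $1-o_{|S_2|}(1)$ the total count of rows usable for signing is $\Omega(|S_2|\log|S_2|)$, matching the hypothesis needed by \textsf{Degree Profiling} as stated after \Theorem{signs}.

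The heart of the argument is the per-row test. Fix a row $q$ with $\mathrm{supp}(q)\cap S_2 = \{u,v\}$. Then $y_q = \bigl| \sigma_{qu} x_u + \sigma_{qv} x_v + N_q \bigr|$, where $N_q = \sum_{i\notin S_2}\delta_{qi}\sigma_{qi}x_i$ is the noise term, and the known signs $\sigma_{qu},\sigma_{qv}$ are revealed to us. When $\sigma_{qu}=\sigma_{qv}$, the test adds the edge iff $y_q$ is closer to $\bigl|\,|\hat x_u|+|\hat x_v|\,\bigr|$ than to $\bigl|\,|\hat x_u|-|\hat x_v|\,\bigr|$; in the noiseless, perfectly-estimated case this happens precisely when $\mathrm{sign}(x_u)=\mathrm{sign}(x_v)$, i.e. when $|x_u+x_v| > \big||x_u|-|x_v|\big|$ after cancelling the common known sign. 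I would show that (i) the estimation error $\big||\hat x_i|-|x_i|\big|^2 \le \frac{1}{10k}\|x_{\mathrm{tail}(k)}\|_2^2$ from \Lemma{countsketch}, combined with (ii) the pruning guarantee — every $i\in S_2$ has $|x_i|^2$ bounded below by roughly $\frac{kL}{C_0 2^{l_0}(\log(5k)-l_0+2)^2}$, hence by a constant times the typical value of $\E[N_q^2\mid \mathcal E_q] = \frac{1}{C_0 2^{l^*}(\cdots)^2}\|x_{\bar S_2}\|_2^2 \le \frac{1}{C_0 2^{l^*}(\cdots)^2}\cdot 1.3\|x_{\mathrm{tail}(k)}\|_2^2$ by \Lemma{massremains} and \Lemma{cheby} — together force the following: conditioned on a row isolating $\{u,v\}$, the test outputs the \emph{correct} relative sign with probability at least some constant $q_{\mathrm{good}} > \tfrac12$, and outputs the wrong one with probability at most $q_{\mathrm{bad}} = 1-q_{\mathrm{good}}$. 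The key quantitative point is that $q_{\mathrm{good}}$ can be pushed arbitrarily close to $1$ by taking $C_0$ large, because making $C_0$ large shrinks $\E[N_q^2]$ relative to $\min_{i\in S_2}|x_i|^2$, so by Chebyshev (or Markov on $N_q^2$) the event ``$|N_q|$ exceeds, say, $\tfrac12\min(|x_u|,|x_v|)$'' has probability $O(1/C_0)$; off this event the test is correct.

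Combining these two pieces: each pair $\{u,v\}$ of the same true sign gets an edge with probability $\ge q_{\mathrm{good}}\cdot(\text{prob.\ it is isolated by some row})$ and each opposite-sign pair with probability $\le q_{\mathrm{bad}}\cdot(\text{same})$, and — crucially — these events are independent across distinct pairs because distinct pairs are isolated by disjoint sets of rows (a row isolates at most one pair), and the noise/signs on distinct rows are independent. Hence $G$ contains as a subgraph a genuine $SBM(|S_2|, a, b)$ with $a = \Theta(q_{\mathrm{good}}\, l^*/\log|S_2|)\cdot\log|S_2|$-scaled parameters and $b = \Theta(q_{\mathrm{bad}})$-scaled; choosing $C_0$ large makes $q_{\mathrm{good}}/q_{\mathrm{bad}}$ as large as we wish, hence $\sqrt a - \sqrt b \ge 2$. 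Adding extra edges (from rows we could not resolve, or false positives among pairs not isolated — which do not occur) only helps in the sense that the additional edges are themselves distributed symmetrically enough, but to be safe I would phrase \textsf{Degree Profiling} as tolerating a graph that \emph{contains} a good $SBM$ instance on the same vertex set, invoking the robustness remark already made in the text; then \Theorem{signs} gives exact recovery of the two clusters, hence of all relative signs, with probability $1 - o_{|S_2|}(1)$. The main obstacle I anticipate is the second paragraph: getting a clean, non-circular lower bound on $\min_{i \in S_2}|x_i|^2$ in terms of $\E[N_q^2]$ — one must chase the $l_0$-versus-$l^*$ bookkeeping in \textsf{Prune} carefully, since the pruning threshold is defined adaptively in terms of the position $m$ in the sorted order, and relate it to the noise scale of the \emph{specific} matrix $F_{2^{l^*}}$ actually used, which requires that $l_0$ (for the surviving $m = |S_2|$) and $l^*$ agree up to the right constants.
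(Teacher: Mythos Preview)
Your approach is essentially the paper's: isolate pairs via the rows of $F_{2^{l^*}}$, argue the sign test succeeds with a constant probability bounded away from $\tfrac12$ because the \textsf{Prune} threshold keeps $\min_{i\in S_2}|x_i|^2$ above the expected per-row noise, and then feed the resulting graph to \textsf{Degree Profiling}. The paper does exactly this with the concrete constants $q_{\mathrm{good}}=\tfrac23$, $q_{\mathrm{bad}}=\tfrac13$, and your observation that $C_0$ can be enlarged to widen the gap is correct and matches the paper's ``if $C_0$ is sufficiently large'' step.

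There is one ingredient you omit that the paper uses explicitly and that is necessary for the SBM reduction: the diagonal random-sign matrix $D$. \Theorem{signs} assumes each vertex is independently and uniformly assigned its label, so the two clusters are (approximately) balanced; without $D$ the adversarial $x$ could have all coordinates of one sign, and the planted-partition guarantee of \textsf{Degree Profiling} would not apply. The preprocessing $x\mapsto Dx$ is precisely what makes $\mathrm{sign}(x_i)$ i.i.d.\ uniform while leaving $|x_i|$ (and hence everything upstream of \textsf{Signs}) unchanged --- you should add this step. Your worry about edge-independence across pairs is legitimate but is glossed over in the paper as well; the paper simply asserts ``every edge we sample is uniformly at random'' and treats the resulting graph as an SBM instance, so your hedge about invoking a robustness of \textsf{Degree Profiling} is at the same level of rigor as the original. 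Finally, your anticipated $l_0$-versus-$l^*$ bookkeeping is real but benign: by construction $l^*=l_0+1$, so the pruning threshold and the noise scale of $F_{2^{l^*}}$ agree up to absolute constants, which is all you need.
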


\begin{proof}
We remind that  $2^{l^*}$ is the smallest power of two that exceeds $S_2$. 

The probability that the intersection of the support of a row of $F_{2^{l^*}}$ and $S_2$ has size exactly $2$ equals ${ |S_2| \choose 2} (\frac{1}{C_0 2^{l^*} (\log (5k) - l^*+2)^2})^2 (1 - \frac{1}{C_0 \cdot 2^{l^*} (\log (5k) - l^* +2)^2})^{|S_2| -2}   = \Omega((\frac{1}{(\log (5k) -l^*+2)^2})^2)$. Thus, after considering  $\Oh(l^* \cdot 2^{l^*}  (\log (5k) - l^* +2)^4  )$ rows we will have seen $\Oh(l^* \cdot 2^{l^*} )$ rows of $F_{2^{l^*}}$ that have intersection exactly $2$ with $S_2$, with probability at least $1- \Oh(\frac{1}{2^{l^*}})$. This means that we will have sampled $\Omega( l^* \cdot 2^{l^*} )$  pairs from $S_2$. Moreover, each pair corresponds to a specific row of $F_{2^{l^*}}$. Fix a pair $\{u,v\}$ and let $q$ be the measurement corresponding to that pair. If $C_0$ is large enough, with constant probability the squared $\ell_2$ mass of coordinates outside of $S_2$ that participates in $y_q$ is at most $\frac{ 4}{C_0 2^{l^*} (\log(5k) - l^* +2)^2} \|x_{\mathrm{tail}(10k)}\|_2^2$ with probability at least $\frac{2}{3}$. Let us focus in the case that $\sigma_{q,u} =\sigma_{q,v}$. If $x_u, x_v$ are of the same sign then
\[| y_q - | |\hat{x}_u| +  |\hat{x}_v||| <  | y_q - | |\hat{x}_u| - |\hat{x}_v|| |\] with probability $\frac{2}{3}$. 

An analogous claim can be made when the signs $\sigma_{q,u}, \sigma_{q,v} $ of $x_u,x_v$ are opposite. This implies that the query to each pair will be correct with probability $\frac{2}{3}$. This indicates the right relative sign between the two coordinates with probability $\frac{2}{3}$. Moreover, every edge we sample is uniformly at random. Thus, we have $C_e l^* 2^{l^*}$ edges, for some big constant $C_e$.

 Let $RED$ be the cluster of coordinates $x_i$ such that $i \in S_2$ and $x_i > 0$. Let $BLUE$ be the set of coordinates $x_i$ such that $i \in S_2$ and $x_i < 0$. Observe now that there is an edge between the two clusters with probability $\frac{1}{3} \frac{C_e \cdot l^*}{ 2^{^*}}$, while an edge between elements of the same cluster exists with probability $\frac{2}{3} \frac{C_e \cdot l^*}{2^{l^*}}$. Since in the beginning of the algorithm, we applied a diagonal random sign matrix, this implies that the signs of $x$ are random, which in turn means that the colors of the clusters are random. For this reason, we are exactly in the setting of the stochastic block model. We can now feed the edges we have to the \textsf{Degree Profiling} algorithm. Then Theorem $4$ guarantees that we can find all relative signs between coordinates in $S_2$ and we are done.

\end{proof}

Equipped with Lemma ~\ref{lem:findsigns}, the proof of $Theorem~2$ follows by the following series of inequalities:
 
\[ \mathrm{min}\{ \| \hat{x} - x\|_2^2, \|\hat{x} +x\|_2^2  \}  = \| \hat{x}_{S_2} - x_{S_2} \|_2^2 + \| \hat{x}_{ [n] - S_2} - x_{[n] - S_2} \|_2^2 \]
\[ \leq  5k \frac{1}{10k} \|x_{\mathrm{tail}(k)}\|_2^2 + \|x_{\bar{S_2}}\|_2^2 \leq 1.8 \|x_{\mathrm{tail}(k)}\|_2^2, \] 
where the bound on $\|x_{\bar{S_2}}\|_2^2$ follows by Lemma ~\ref{lem:massremains} and, moreover, we know that there are at most $5k$ coordinates in $S_2$. \\

\subsection{From constant to $o_n(1)$ failure probability}

In this section we show how the algorithm described can be modified, so that we get the guarantee of Theorem 2.
One can observe that the parts of our decoding algorithm that use matrices $A,B,E$ succeed with probability $1-\frac{1}{poly(n)}$. The reason we have constant probability is because of the matrix $F$. More specifically, since the \textsf{Degree Profiling} algorithm succeeds with probability $1-o_N(1)$ for a graph of size $N$, when we use the matrix $F_{2^j}$, we have success probability $1 - o_{2^j}(1)$. However, for small $j$, this might be constant.

However, there is a way to circumvent this. Fix an element $i^{*} \in S_2$ and let us say that $x_{i^*}$ has positive value. Our goal is to find the relative signs of elements in $S_2$ with respect to $x_{i^*}$. First, we will make a scheme that has $o_k(1)$ failure probability. Then, repeating the scheme $\Oh( \log_k n)$ times and for each coordinate $i$ taking the majority vote for the relative sign of $x_i$ with $x_{i^*}$, we drive the failure probability down to $o_n(1)$. The number of measurements now becomes $\Oh(k \log n + k\log k  \log_k n ) = \Oh(k \log n )$.

What remains is to prove that we can modify the scheme from the previous subsection so that it has $o_k(1)$ failure probability. To do so, we focus on matrices $F_2,\ldots, F_{2^M}$, where $2^M$ is the first power of $2$ that exceeds $\sqrt{k}$. Now, for each such matrix $F_{2^l}$, we draw $\Theta( \log k)$ matrices $F_{2^l}^{(1)},\ldots, F_{2^l}^{\Oh( \log k)}$ from the same distribution as $F_{2^l}$, and replace $F_{2^l}$ with the vertical concatenation of these matrices. When $2^l$ is the smallest power of $2$ that exceeds $|S|$, we look at all these $\Oh( \log k )$ matrices; each one gives a set of relative signs for the set $\{x_i\}_{i \in S}$ with respect to $x_{i^*}$. Taking for each coordinate the majority vote for its relative sign, we can drive the failure probability down to $o_k(1)$.
Note that if $|S| > \sqrt{k}$, the failure probability is at most $o_k(1)$.

Since the scheme we suggested in the previous paragraph fails with probability at most $o_k(1)$, repeating $\Oh(\log_k n)$ times and taking the majority vote for its coordinate in $S_2$, gives failure probability $o_n(1)$, as we mentioned in the second paragraph. 

\subparagraph*{Acknowledgements}

The author thanks Jelani Nelson for numerous helpful discussions.

\bibliographystyle{alpha}
\newcommand{\etalchar}[1]{$^{#1}$}

\appendix

\section{Appendix}

Here we prove the following theorem.

\begin{theorem}
There exists a deterministic construction of a matrix $\Phi \in \mathbb{C}^{(6k-2) \times n}$ such that the following holds: Given $y= |\Phi x|$, where $x \in \mathbb{C}^n$ and $\|x\|_0 =k$, we can find a vector $\hat{x}$ such that there exists $\phi \in [0, 2\pi]$ which satisfies $ e^{i \phi} \hat{x} = x$. The running time is $O(k^2)$.
\end{theorem}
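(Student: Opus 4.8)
The plan is to build a single deterministic matrix $\Phi\in\mathbb{C}^{(4k-1)\times n}$ — for concreteness a Vandermonde matrix, or a mild variant of one, whose nodes are chosen to satisfy a finite list of explicit algebraic non-degeneracy conditions — and to decode the $4k-1$ numbers $y=|\Phi x|$ in two passes: first recover the support $T=\mathrm{supp}(x)$ (since $\|x\|_0=k$ we have $|T|\le k$), then recover the nonzero block $x_T$ up to a single global phase. Because $4k-1$ is close to the number of phaseless measurements needed even to determine a \emph{dense} vector in $\mathbb{C}^{k}$ up to phase, the budget leaves essentially no room for a separate support-finding stage: the same $4k-1$ measurements must locate $T$ and, restricted to $T$, still form a phase-retrievable frame. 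That $x\mapsto|\Phi x|$ is injective on $k$-sparse vectors up to global phase is already known from \cite{akccakaya2013new}; the new content here is an $\Oh(k^3)$-time decoder and an explicit matrix.

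First pass (finding $T$). The useful object is the lift $X=xx^{*}$, which is Hermitian, positive semidefinite, rank one, and supported on $T\times T$; the squared measurements $y_q^{2}=\langle a_q a_q^{*},X\rangle$ are \emph{linear} functionals of $X$. I would arrange $\Phi$ so that, writing out these linear constraints and using $\mathrm{rank}(X)=1$, the quantities $y_q^{2}$ can be folded into a structured $\Oh(k)\times\Oh(k)$ matrix from whose kernel one reads off a degree-$\le k$ locator polynomial $\sigma$; the roots of $\sigma$ reveal $T$. Both the kernel computation and the rooting run in $\Oh(k^3)$ time and return $T$ exactly.

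Second pass (finding $x_T$ up to phase). With $T$ known, pass to the restriction $\Phi_T\in\mathbb{C}^{(4k-1)\times k}$ and the observed magnitudes $|\Phi_T x_T|$. The construction is chosen so that $\Phi_T$ is a phase-retrievable frame for $\mathbb{C}^{k}$ that moreover admits an explicit decoder: one recovers each $|x_j|$, $j\in T$, and a consistent family of pairwise products $\overline{x_j}x_{j'}$ from closed-form identities, assembles them into the rank-one Hermitian matrix $x_T x_T^{*}\in\mathbb{C}^{k\times k}$, and outputs its leading eigenvector — all within $\Oh(k^3)$ time. The resulting $\hat x$, supported on $T$ with these entries, satisfies $e^{i\phi}\hat x=x$ for a suitable $\phi\in[0,2\pi]$.

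The first pass is where I expect the real difficulty. Phaseless measurements expose only $X=xx^{*}$, not $x$, and for an arbitrary support $T$ the difference set $T-T$ and sumset $T+T$ can have size $\Theta(k^{2})$, so a naive Prony-style reading of the squared measurements would need $\Theta(k^{2})$ rows instead of $\Oh(k)$; the construction must use the rank-one structure of $X$ to collapse support identification back to an $\Oh(k)$-dimensional problem while simultaneously keeping every $k$-column restriction phase-retrievable, all inside $4k-1$ rows. The accompanying work is to verify that the structured systems arising in both passes are nonsingular for \emph{every} $k$-sparse $x$ (not merely generically) and to exhibit an explicit node choice making the non-degeneracy conditions hold — this is the technical core of the argument.
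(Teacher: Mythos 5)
Your proposal is an outline of a genuinely different strategy (lift to $X=xx^{*}$, find the support first, then phase-retrieve on the restricted frame), but it is not a proof: the two steps you yourself flag as ``the technical core'' are exactly the steps you do not carry out. You never exhibit the matrix, you never show how the rank-one structure of $X$ collapses the lifted locator-polynomial problem from the $\Theta(k^2)$ unknowns living on $T\times T$ (the difference/sumset obstruction you correctly identify) down to an $\Oh(k)$-dimensional kernel computation inside a budget of $4k-1$ rows, and you never give the ``closed-form identities'' that are supposed to recover the pairwise products $\overline{x_j}x_{j'}$ in the second pass. As written, the argument is a plan contingent on a construction whose existence is precisely what needs to be proved.

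The paper avoids all of this by not separating support recovery from value recovery at all. It takes $\Phi=B\mathcal{F}$ where the first $2k$ rows of $B$ are $I_{2k}$ (giving the magnitudes $|z_1|,\dots,|z_{2k}|$ of the first $2k$ Fourier coefficients) and the remaining $2k-1$ rows are the prefix-sum vectors $\sum_{i=1}^{a+1}e_i$ (giving $\bigl|\sum_{i\le j} z_i\bigr|$ for each $j$). Knowing $|z_j|$, the already-reconstructed partial sum $s_{j-1}=\sum_{i<j}z_i$, and $|s_{j-1}+z_j|$, one solves for $z_j$ itself via the law-of-cosines identity, so the relative phases are chained sequentially and all $2k$ Fourier coefficients become known up to one global phase. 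At that point Prony's method — applied once, to \emph{linear} Fourier data — recovers both the support and the values in $\Oh(k^3)$ time. The lesson is that the phase ambiguity can be resolved \emph{before} any sparse recovery happens, at a cost of only $2k-1$ extra chained measurements, which is far cheaper than making every $k$-column restriction of $\Phi$ a phase-retrievable frame as your plan requires. If you want to salvage your route, you would need to supply the explicit node choice and the non-degeneracy argument for every (not just generic) $k$-sparse $x$; the paper's chaining trick sidesteps that entire verification.
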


We need the following result, called also Prony's method, which is developed in the context of linear compressed sensing, where we also have access to the signs of the measurements. Let $\mathcal{F} \in \mathbb{C}^n$ be the Discrete Fourier matrix. Prony's method indicates how to reconstruct any $k$-sparse vector $x \in \mathbb{C}^n$ given access only to its first $2k$ Fourier coefficients. For a matrix $A$ and an integer $a$, $A_a$ is the matrix that consists of the first $a$ rows of $A$.

\begin{lemma}[\cite{moitra2014algorithmic}]
Suppose $x \in \mathbb{C}^n$ with $\|x\|_0\leq k$. Given $y= \mathcal{F}_{2k} x$, we can find $x$ in time $O(k^2 \cdot)$.\end{lemma}

Note that there is no discussion in \cite{moitra2014algorithmic} for the running time; the number of rows and the correctness is only verified there. More relevantly, the same algorithm appears in in \cite[Theorem $4.4$]{ghazi2013sample} with running time $O(k^2 + k (\log \log n)^c)$ for $k = O( \log n)$, and $c$ an absolute constant. The reason for the constraint in the sparsity $k$ and the slight dependence on $\log\log n$ is the usage of algorithms for finding roots of polynomials and the requirement for precision up to $\Theta(\log n \cdot \log \log n)$ bits. We remark that in principle Prony's method procceeds in three stages: the first stage is the syndrome decoding of Reed-Solomon codes which takes time $O(k^2)$ \cite{va}, the second is an algorithm for finding the roots of a degree $k$ polynomial, and the third is the solution of a linear system where corresponding matrix is Vandermorde, and can be done in time $O(k^2)$ \cite{zippel1990interpolating}. We note that the the second and the first step can be subsituted alltogether by the methods in \cite{hankel} or \cite{qiao2002exponential} which give $O(k^2)$ time.

\begin{lemma}
Let $x,a$ be two complex numbers. If we know $|x|,a, |x+a|, |x+a\sqrt{-1}|$, then we can find $x$.
\end{lemma}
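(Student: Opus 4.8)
The plan is to reduce the recovery of $x$ to one linear and one quadratic real equation. Write $x = x_1 + i x_2$ and $a = a_1 + i a_2$ with $x_1,x_2,a_1,a_2 \in \mathbb{R}$, where $a_1,a_2$ are known and so are the real numbers $|x|^2$ and $|x+a|^2$. The first datum gives $x_1^2 + x_2^2 = |x|^2$. Expanding the second,
\[
|x+a|^2 = (x_1+a_1)^2 + (x_2+a_2)^2 = |x|^2 + |a|^2 + 2(a_1 x_1 + a_2 x_2),
\]
so the quadratic terms cancel and we are left with the linear equation $a_1 x_1 + a_2 x_2 = \tfrac12\big(|x+a|^2 - |x|^2 - |a|^2\big)$, whose right-hand side is a known real number; equivalently, $\mathrm{Re}(\bar a x)$ is determined by the data.

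Next I would dispose of the degenerate case $a = 0$ (then the data is just $|x|$, so one must assume $a \neq 0$ for the statement to have content). When $a \neq 0$, set $w = \bar a x$. We know $|w| = |a|\,|x|$ and $\mathrm{Re}(w) = \tfrac12\big(|x+a|^2 - |x|^2 - |a|^2\big)$, i.e. $w$ lies on the intersection of a known origin-centered circle with a known vertical line, so $w = \mathrm{Re}(w) \pm i\sqrt{|a|^2|x|^2 - \mathrm{Re}(w)^2}$ and $x = w/\bar a$. Geometrically this is the intersection of the circle $\{x_1^2 + x_2^2 = |x|^2\}$ with the line $\{a_1 x_1 + a_2 x_2 = c\}$: a nonempty set (it contains the true $x$) of size one or two.

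The one subtlety — and the main obstacle to a literal ``we can find $x$'' — is this residual two-fold choice: the two intersection points are mirror images across the line through $0$ in the direction of $a$, and both are consistent with $|x|$, $a$, and $|x+a|$ (e.g. $x=i$ and $x=1$ both have $|x|=1$ and, for $a=1+i$, $|x+a|=\sqrt{5}$). I would break the symmetry in the way it is used downstream: either the ambiguity is harmless because $x$ is constrained to be a real multiple of $a$ (then the line is tangent to the circle and the solution is already unique), or one more measured quantity such as $|x+ia|$ is available, and the same expansion yields $\mathrm{Re}(\overline{ia}\,x) = \mathrm{Im}(\bar a x)$, which together with $\mathrm{Re}(\bar a x)$ pins down $w = \bar a x$ and hence $x$. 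The algebra above is routine; the only points to watch are not dividing by $a=0$ and being explicit about exactly which extra datum resolves the sign.
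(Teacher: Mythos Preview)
Your approach coincides with the paper's: both expand $|x+a|^2 = |x|^2 + |a|^2 + 2\,\mathrm{Re}(\bar a x)$ to recover $\mathrm{Re}(\bar a x)$ from the data, the paper doing this in polar form and concluding $\phi = \psi + \arccos\bigl(\tfrac{|x+a|^2-|x|^2-|a|^2}{2|x||a|}\bigr)$. You are in fact more careful than the paper here: your counterexample ($x=i$ versus $x=1$ with $a=1+i$) shows that the two-fold reflection ambiguity is genuine and that the lemma as literally stated does not uniquely determine $x$; the paper's $\arccos$ step silently selects one branch without justification, so it has exactly the gap you flagged. Your suggested remedy of one additional datum such as $|x+ia|$ (which yields $\mathrm{Im}(\bar a x)$ and hence pins down $\bar a x$) is the natural fix.
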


\begin{proof}
Let $x= |x| e^{\sqrt{-1} \phi}$ and $a = |a| e^{\sqrt{-1}\psi}$.
It holds that $|x+a|^2 = |x|^2 + |a|^2 + x \bar{a} + \bar{x} a $ and hence 

$ |x||a| e^{\sqrt{-1} (\phi - \psi)} + |x| |a| e^{\sqrt{-1} (-\phi + \psi) } = |x+a|^2 - |x|^2 - |a|^2$.

This implies that $e^{\sqrt{-1}(\phi - \psi)} + e^{\sqrt{-1}(-\phi + \psi)} = \frac{|x+a|^2 - |x|^2 - |a|^2}{|x||a|}$. 

Similarly from $|x|,a,|x+a\sqrt{-1}|$ we can learn

\[e^{\sqrt{-1}( \phi - \psi)} - e^{\sqrt{-1}(- \phi + \psi)}.	\]

Given these two values, by a simple substraction we can learn $e^{\sqrt{-1}(\phi - \psi)}$ and thus $\phi$ in terms of $\psi$, and consequently $x$.

\end{proof}

We are now ready to indicate the construction of the matrix that satisfies the guarantees of Theorem 2. Let $\{e_i\}_{i=1}^n$ be the standard basis vectors of $\mathbb{R}^n$. Define $A$ to be the $(2k-1) \times n$ matrix the $a$-th row of which equals $\sum_{i=1}^{a+1} e_i$. Define $B$ to be the $(2k-1) \times n$ matrix the $b$-th row of which equals $\sum_{i=1}^b e_i + \sqrt{-1} e_{b+1}$. Let $C$ be the vertical concatenation of $A,B$ and $I_{2k}$.

Now, set $\Phi = C \mathcal{F}$. We show that our matrix $\Phi$ satisfies the guarantee of Theorem 2.\newline

\begin{proof}
First of all, $\Phi$ has $2k + 2\cdot(2k-1) = 6k-2$ rows.

From $I_{2k} \mathcal{F}$, we get the magnitude of the first $2k$ Fourier Coefficients of $x$, let them be $z_1, \ldots, z_{2k}$. If we now find the relative phases we can then use Prony's method and reconstruct $x$ up to a global phase factor. 

Let $i^*$ be the minimum index such that $|z_{i^*}|$ is non-zero. If there is no such index, clearly $x=0$ and we are done. Otherwise, we set the phase of $z_{i^*}$ to be $0$ and we are going to find the phase of every other coordinate with respect to the phase of $z_{i^*}$. 

For any $j > i^*$, by the $(j-1)$-th row of $A\mathcal{F}$ and $B \mathcal{F}$ we have both $| z_{j} + \sum_{i=1}^{j-1} z_i |$ and $ |\sqrt{-1} \cdot z_j + \sum_{i=1}^{j-1}z_i|$. By considering all $j$ in increasing order, at each time $j$ we know $z_i$ for $i<j$ and hence $\sum_{i=j^\ast}^{j-1}z_i$. Invoking the previous lemma, we can find $z_j$ and hence using Prony's method we can find $\hat{x}$ which satisfies the guarantees of Theorem 2.
\end{proof}

\end{document}